\definecolor{darkblue}{rgb}{.15,0,.7}
\newtheorem{theorem}{Theorem}
\newtheorem{lemma}{Lemma}
\newtheorem*{lemma*}{Lemma}
\newtheorem*{claim*}{Claim}
\theoremstyle{definition}
\newtheorem{definition}{Definition}
\theoremstyle{remark}
\begin{document}
\title{On the Relation Between Identifiability, Differential Privacy and Mutual-Information Privacy}

\author{Weina Wang, Lei Ying and Junshan Zhang
\thanks{%
The authors are with the School of Electrical, Computer and Energy Engineering, Arizona State University, Tempe, AZ 85281, USA (e-mail: \{weina.wang, lei.ying.2, junshan.zhang\}@asu.edu).}
}

\maketitle

\begin{abstract}
This paper investigates the relation between three different notions of privacy: identifiability, differential privacy and mutual-information privacy. Under a unified privacy--distortion framework, where the distortion is defined to be the expected Hamming distance between the input and output databases, we establish some fundamental connections between these three privacy notions. Given a maximum distortion $D$, define $\epsilon_{\mathrm{i}}^*(D)$ to be the smallest (best) identifiability level, and $\epsilon_{\mathrm{d}}^*(D)$ to be the smallest differential privacy level. We characterize $\epsilon_{\mathrm{i}}^*(D)$ and $\epsilon_{\mathrm{d}}^*(D)$, and prove that $ \epsilon_{\mathrm{i}}^*(D)-\epsilon_X\le\epsilon_{\mathrm{d}}^*(D)\le\epsilon_{\mathrm{i}}^*(D)$ for $D$ in some range, where $\epsilon_X$ is a constant depending on the distribution of the original database $X$, and diminishes to zero when the distribution of $X$ is uniform. Furthermore, we show that identifiability and mutual-information privacy are consistent in the sense that given a maximum distortion $D$ in some range, there is a mechanism that optimizes the identifiability level and also achieves the best mutual-information privacy.
\end{abstract}

\section{Introduction}
Privacy has been an increasing concern in the emerging big data era, particularly with the growing use of personal data such as medical records or online activities for big data analysis. Analyzing these data results in new discoveries in science and engineering, but also puts individual's privacy at potential risks. Therefore, privacy-preserving data analysis, where the goal is to preserve the accuracy of data analysis while maintaining individual's privacy, has become one of the main challenges of this big data era. The basic idea of privacy-preserving data analysis is to add randomness in the released information to guarantee that an individual's information cannot be inferred. Intuitively, the higher the randomness is, the better privacy protection individual users get, but the less accurate (useful) the output statistical information is. While randomization seems to be inevitable, for the privacy-preserving data analysis it is of great interest to quantitatively define the notion of privacy. Specifically, we need to understand the amount of randomness needed to protect privacy while preserving usefulness of the data. To this end, we consider three different notions: identifiability, differential privacy and mutual-information privacy, where identifiability is concerned with the posteriors of recovering the original data from the released data, differential privacy is concerned with the additional information leakage of an individual due to the release of the data, and mutual information measures the amount of information about the original database contained in the released data.

While these three different privacy notions are defined from different perspectives, they are fundamentally related. The focus of this paper is to investigate the fundamental connections between these three different privacy notions in the following setting:
\begin{itemize} %[leftmargin=1em]
\item We consider the non-interactive database releasing approach for privacy-preserving data analysis, where a synthetic database is released to the public. The synthetic database is a sanitized version of the original database, on which queries and operations can be carried out as if it was the original database. It is then natural to assume that the synthetic database and the original database are in the same universe so the entries have the same interpretation. Therefore we focus on mechanisms that map an input database to an output synthetic database in the same universe. Specifically, we consider a database consisting of $n$ rows, each of which takes values from a finite domain $\mathcal{D}$ of size $m$. In this paper, the database is modeled as a discrete random variable $X$ drawn from $\mathcal{D}^n$ with distribution $p_X$. A mechanism $\mathcal{M}$ is a mapping from an input database $X$ to an output database $Y$, which is also a random variable with alphabet $\mathcal{D}^n$.

\item We define the \emph{distortion} between the output database and the input database to be the expected Hamming distance.~\footnote{Our study of more general distortion measures is underway.} When the input and output are in the same universe, the Hamming distance measures the number of rows two databases differ on, which directly points to the number of rows that need to be modified in order to guarantee a given privacy level.
\end{itemize}

\begin{figure*}
\centering
\includegraphics[scale=0.75]{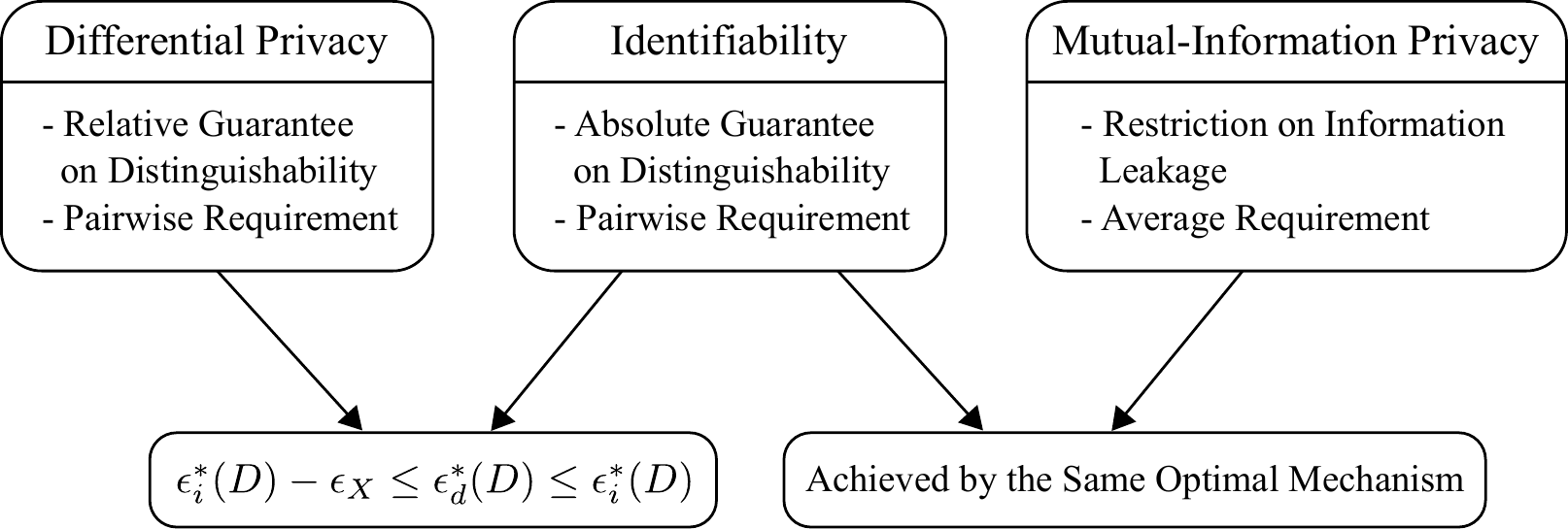}
\caption{Relation between identifiability, differential privacy and mutual-information privacy.}
\label{figRelation}
\end{figure*}
In this paper, we use a unified \emph{privacy--distortion} framework to understand the relation between the three privacy notions. Define the privacy--distortion function to be the best privacy level given a distortion constraint. Then we have the following main results, which are also summarized in \figurename~\ref{figRelation}.

\begin{itemize}
\item[(i)] We derive the exact form of the privacy--distortion function $\epsilon_{\mathrm{i}}^*(D)$ under the notion of identifiability, for some range of the distortion values, by showing that $\epsilon_{\mathrm{i}}^*(D)=h^{-1}(D)$ regardless of the prior distribution, where $$h^{-1}(D)=\ln\bigl(\frac{n}{D}-1\bigr)+\ln(m-1).$$ We further show that for the privacy--distortion function $\epsilon_{\mathrm{d}}^*(D)$ under the notion of differential privacy, $$\epsilon_{\mathrm{i}}^*(D)-\epsilon_X \leq \epsilon_{\mathrm{d}}^*(D)\leq \epsilon_{\mathrm{i}}^*(D),$$ where $\epsilon_X$ is a constant depending on the distribution of $X$ only, given by
\begin{equation*}
\epsilon_X=\max_{x,x'\in\mathcal{D}^n:x\sim x'}\ln\frac{p_X(x)}{p_X(x')}.
\end{equation*}
When the input database has a uniform distribution, we have that $\epsilon_{\mathrm{i}}^*=\epsilon_{\mathrm{d}}^*,$ i.e., differential privacy is equivalent to identifiability. Note that for $\epsilon_X$ to be finite, the distribution $p_X$ needs to have full a support on $\mathcal{D}^n$, i.e., $p_X(x)>0$ for any $x\in\mathcal{D}^n$. When $\epsilon_X$ is large, differential privacy provides only weak guarantee on identifiability. In other words, it is possible to identify some entries of the database with non-trivial accuracy even if the differential privacy is guaranteed when $\epsilon_X$ is large. This is because differential privacy provides a \emph{relative} privacy guarantee, which ensures that limited \emph{additional} information of an individual is leaked in the released data in addition to the knowledge that an adversary has known. Identifiability, on the other hand, guarantees an \emph{absolute} level of indistinguishability of neighboring databases when being inferred from the output database assuming the distribution of $p_X$ and the mechanism are both known to the adversary.

\item[(ii)] Given a maximum distortion $D$ in some range, there is a mechanism that minimizes the identifiability level and also minimizes the mutual information between $X$ and $Y$. In other words, identifiability and mutual-information privacy are consistent under the setting studied in this paper. This is somewhat surprising since identifiability imposes constraints on the distributions of neighboring input databases, which are ``local'' requirements; whereas the mutual information quantifies the correlation strength between the input database and the output database, which is a ``global'' measure. While the two notions are not directly comparable, the fact that they can be optimized simultaneously in the setting studied in this paper reveals the fundamental connection between these two privacy notions.
\end{itemize}

\subsection{Related Work}
Differential privacy, as an analytical foundation for privacy-preserving data analysis, was developed by a line of work (see, e.g., \cite{DwoMcSNis_06,Dwo_06,DwoKenMcS_06}). Dwork et al. \cite{DwoMcSNis_06} proposed the Laplace mechanism which adds Laplace noise to each query result, with noise amplitude proportional to the global sensitivity of the query function. Nissim et al. \cite{NisRasSmi_07} later generalize the mechanism using the concept of local sensitivity. The notion of $(\epsilon,\delta)$-differential privacy \cite{DwoKenMcS_06} has also been proposed as a relaxation of $\epsilon$-differential privacy.

The existing research of differential privacy can be largely classified into two categories: the interactive model where the randomness is added to the result of a query; and the non-interactive model, where the randomness is added to the database before queried. Under the interactive model, a significant body of work has been devoted to privacy--usefulness tradeoff and differentially private mechanisms with accuracy guarantee on each query result have been developed (see, e.g., \cite{GhoRouSun_09,RotRou_10,HarRot_10,MutNik_12}). Since the interactive model allows only a limited number of queries to be answered before the privacy is breached, researchers have also studied the non-interactive model, where synthetic databases or contingency tables with differential privacy guarantees were generated. Mechanisms with distortion guarantee for a set of queries to be answered using the synthetic database have been developed (see, e.g., \cite{BluLigRot_08,DwoNaoRei_09,KasRudSmi_10,UllVad_11,HarLigMcS_12}).

Arising from legal definitions of privacy, identifiability has also been considered as a notion of privacy. Lee and Clifton \cite{LeeCli_12} and Li et al. \cite{LiQarSu_13} proposed differential identifiability and membership privacy, respectively. Mutual information as a measure of privacy leakage has been widely used in the literature (see, e.g., \cite{ClaHunMal_05,Smi_09,ChaChoGuh_10,ZhuBet_05,ChaPalPan_07,duPFaw_12,MakFaw_13,SanRajPoo_13}), mostly under the context of quantitative information flow and anonymity systems.

The connections between different privacy notions have studied recently, e.g., \cite{AlvAndCha_12,duPFaw_12,MakFaw_13,Mir_13}. Alvim et al. \cite{AlvAndCha_12} showed that differential privacy implies a bound on the min-entropy leakage. du Pin Calmon, Makhdoumi and Fawaz \cite{duPFaw_12,MakFaw_13} showed the relationship between differential privacy, information privacy and divergence privacy. Mir \cite{Mir_13} pointed out that the mechanism that achieves the optimal rate--distortion also guarantees a certain level of differential privacy. However, whether this differential privacy level is optimal or how far it is from optimal was not answered in \cite{Mir_13}. The fundamental connection between differential privacy, mutual information and distortion is not yet clear.
The connection between differential privacy and mutual information has also been studied in the two-party setting \cite{McGMirPit_10}, where mutual information is used as the information cost for the protocol of communication between the two parities.

\section{Model}
Consider a database consisting of $n$ rows, each of which corresponds to some sensitive information. For example, each row could be an individual's medical records. The database could also be a graph, where each row indicates the existence of some edge. Suppose that rows take values from a domain $\mathcal{D}$. Then $\mathcal{D}^n$ is the set of all possible values of the database. Two databases, denoted by, $x,x'\in\mathcal{D}^n$, are said to be \emph{neighbors} and denoted as $x\sim x'$ if they differ on exactly row. In this paper, we assume that the domain $\mathcal{D}$ is a finite set and model a database as a discrete random variable $X$ with alphabet $\mathcal{D}^n$ and probability mass function (pmf) $p_X$. Suppose $|\mathcal{D}|=m$ and let $N=m^n$, where $m$ is an integer and $m\ge 2$. Then $|\mathcal{D}^n|=N$. A (randomized) mechanism $\mathcal{M}$ takes a database $x$ as the input, and outputs a random variable $\mathcal{M}(x)$.

\begin{definition}[Mechanism]
A \emph{mechanism} $\mathcal{M}$ is specified by an \emph{associated mapping} $\phi_{\mathcal{M}}\colon \mathcal{D}^n\rightarrow \mathcal{F}$, where $\mathcal{F}$ is the set of multivariate cdf's on some $\mathcal{R}\subseteq\mathbb{R}^r$. Taking database $X$ as the input, the mechanism $\mathcal{M}$ outputs a $\mathcal{R}$-valued random variable $Y$ with $\phi_{\mathcal{M}}(x)$ as the multivariate conditional cdf of $Y$ given $X=x$. \hfill{$\square$}
\end{definition}

In this paper, we focus on mechanisms $\mathcal{M}$ of which the range is the same as the alphabet of $X$, i.e., $\mathcal{R}=\mathcal{D}^n$. Then the output $Y$ is also a discrete random variable with alphabet $\mathcal{D}^n$. Denote the conditional pmf of $Y$ given $X=x$ defined by the cdf $\phi_{\mathcal{M}}(x)$ as $p_{Y\mid X}(\cdot\mid x)$. Throughout this paper we use the following basic notation. We denote the set of real numbers by $\mathbb{R}$, the set of nonnegative real numbers by $\mathbb{R}^+$, and the set of nonnegative integers by $\mathbb{N}$. Let $\overline{\mathbb{R}}^+=\mathbb{R}^+\cup\{+\infty\}$.

\subsection{Different Notions of Privacy}
In addition to the output database $Y,$ we assume that the adversary knows the prior distribution $p_X(x),$ which represents the side information the adversary has, and the privacy-preserving mechanism $\cal M.$ The three notions of privacy studied in this paper are defined next.
\begin{definition}[Identifiability]
A mechanism $\mathcal{M}$ satisfies $\epsilon$\emph{-identifiability} for some $\epsilon\in\overline{\mathbb{R}}^+$ if for any pair of neighboring elements $x,x'\in\mathcal{D}^n$ and any $y\in\mathcal{D}^n$,
\begin{equation}
p_{X\mid Y}(x\mid y)\le e^{\epsilon} p_{X\mid Y}(x'\mid y).
\end{equation} \hfill{$\square$}
\end{definition}
Different from differential identifiability \cite{LeeCli_12} and membership privacy \cite{LiQarSu_13}, which are concerned with whether a particular entity occurs in the database, the notion of identifiability considered here provides a posteriori indistinguishability between any neighboring $x$ and $x'$, thus preventing the whole database from being identified. Note that this is an {\em absolute} level of indistinguishability, such that the adversary cannot distinguish $x$ and $x'$ based on the output database $y$, prior knowledge $p_X$ and the mechanism $\mathcal{M}$.

\begin{definition}[Differential Privacy \cite{DwoMcSNis_06,Dwo_06}]
A mechanism $\mathcal{M}$ satisfies $\epsilon$\emph{-differential privacy} for some $\epsilon\in\overline{\mathbb{R}}^+$ if for any pair of neighboring elements $x,x'\in\mathcal{D}^n$ and any $y\in\mathcal{D}^n$,
\begin{equation}\label{eqDPConditional}
p_{Y\mid X}(y\mid x)\le e^{\epsilon} p_{Y\mid X}(y\mid x').
\end{equation}\hfill{$\square$}
\end{definition}

In \cite{DwoMcSNis_06,Dwo_06},  a mechanism $\mathcal{M}$ satisfies $\epsilon$\emph{-differential privacy} for some $\epsilon\in\overline{\mathbb{R}}^+$ if for any pair of neighboring elements $x,x'\in\mathcal{D}^n$, and any $\mathcal{S}\subseteq\mathcal{R}$,
\begin{equation}\label{eqDPPrior}
\Pr\{Y\in\mathcal{S}\mid X=x\}\le e^\epsilon\Pr\{Y\in \mathcal{S}\mid X=x'\},
\end{equation}
where the conditional probabilities $\Pr\{Y\in\mathcal{S}\mid X=x\}$ and $\Pr\{Y\in\mathcal{S}\mid X=x'\}$ are defined by the multivariate conditional cdf's $\phi_{\mathcal{M}}(x)$ and $\phi_{\mathcal{M}}(x')$, respectively. In the case that the range $\mathcal{R}=\mathcal{D}^n$, which is a discrete set, this is equivalent to the requirement \eqref{eqDPConditional}. The differential privacy property of a mechanism is fully characterized by the associated mapping. Given any particular database, a mechanism $\mathcal{M}$ provides the same privacy guarantee regardless of the prior $p_X$, as long as the associated mapping $\phi_{\mathcal{M}}$ has been specified.

Note that the guarantee that the notion of differential privacy provides is a \emph{relative} one, which ensures that limited \emph{additional} information of an individual is leaked due to the presence of this individual in the database in addition to the knowledge that an adversary has known.

\begin{definition}[Mutual-Information Privacy]
A mechanism $\mathcal{M}$ satisfies $\epsilon$\emph{-mutual-information privacy} for some $\epsilon\in\overline{\mathbb{R}}^+$ if the mutual information between $X$ and $Y$ satisfies $I(X;Y)\leq \epsilon,$ where 
\begin{equation}
I(X;Y)=\sum_{x,y\in\mathcal{D}^n}p_{X,Y}(x,y)\log\frac{p_{X,Y}(x,y)}{p_X(x)p_Y(y)}.
\end{equation}
\hfill{$\square$}
\end{definition}

Mutual information is widely used to quantify information leakage in the literature, mostly under the context of quantitative information flow and anonymity systems. Under our setting, the information leakage we need to quantify is between the input database $X$ and the output database $Y$. Note that the notion of mutual information is an information theoretic notion of privacy, which measures the \emph{average} amount of information about $X$ contained in $Y.$  When $X$ and $Y$ are independent, $I(X;Y)=0.$ The mutual information is maximized and equal to $H(X)$ when $Y=X.$

\subsection{Distortion}
In this paper, we measure the usefulness of a mechanism by the distortion between the database $X$ and the output $Y$, where smaller distortion yields greater usefulness. Consider the Hamming distance $d\colon\mathcal{D}^n\times\mathcal{D}^n\rightarrow \mathbb{N}$. Viewing elements in $\mathcal{D}^n$ as vectors of $n$ rows, the distance $d(x,x')$ between two elements $x,x'\in\mathcal{D}^n$ is the number of rows they differ on. We define the distortion between $X$ and $Y$ to be the expected Hamming distance
\begin{equation}
\mathbb{E}[d(X,Y)]=\sum_{x\in\mathcal{D}^n}\sum_{y\in\mathcal{D}^n}p_X(x)p_{Y\mid X}(y\mid x)d(x,y).
\end{equation}
The Hamming distance also characterizes the neighboring relation on $\mathcal{D}^n$. Two elements $x,x'\in\mathcal{D}^n$ are neighbors if and only if $d(x,x')=1$.

\subsection{Privacy--Distortion Function}
A privacy--distortion pair $(\epsilon,D)$ is said to be \emph{achievable} if there exists a mechanism $\mathcal{M}$ with output $Y$ such that $\mathcal{M}$ satisfies $\epsilon$-privacy level and $\mathbb{E}[d(X,Y)]\le D$. The \emph{privacy--distortion function} $\epsilon^*\colon\mathbb{R}^+\rightarrow\overline{\mathbb{R}}^+$ is defined by
\begin{equation}
\epsilon^*(D)=\inf\{\epsilon\colon (\epsilon,D)\text{ is achievable}\},
\end{equation}
which is the smallest (best) privacy level given the distortion constraint $\mathbb{E}[d(X,Y)]\le D$. We are only interested in the range $[0,n]$ for $D$ since this is the meaningful range for distortion. The privacy--distortion function depends on the prior $p_X$, which reflects the impact of the prior on the privacy--distortion tradeoff. To characterize the privacy--distortion function, we also consider the \emph{distortion--privacy function} $D^*\colon \overline{\mathbb{R}}^+\rightarrow\mathbb{R}^+$ defined by
\begin{equation}
D^*(\epsilon)=\inf\{D\colon (\epsilon,D)\text{ is achievable}\},
\end{equation}
which is the smallest achievable distortion given privacy level $\epsilon$. 

In this paper we consider three different notions of privacy: identifiability, differential privacy and mutual-information privacy, so we denote the privacy--distortion functions under these three notions by $\epsilon_{\mathrm{i}}^*$, $\epsilon_{\mathrm{d}}^*$ and $\epsilon_{\mathrm{m}}^*$, respectively.

\section{Identifiability versus Differential Privacy}\label{secPDT}

In this section, we establish a fundamental connection between identifiability and differential privacy. Given privacy level $\epsilon_{\mathrm{i}}$ and $\epsilon_{\mathrm{d}}$, the minimum distortion level is the solution to the following optimization problems.

\noindent{\bf The Privacy--Distortion Problem under Identifiability (PD-I):}
\begin{align}
\underset{p_{X\mid Y},\mspace{3mu}p_Y}{\text{min}}\mspace{21mu}&\sum_{x\in\mathcal{D}^n}\sum_{y\in\mathcal{D}^n}p_Y(y)p_{X\mid Y}(x\mid y)d(x,y)\nonumber\\ %\label{optPostDistortionPrivacy}\\
\begin{split}\label{conPostPrivacy}
\text{subject to}\mspace{18mu} &\mspace{9mu} p_{X\mid Y}(x\mid y)\le e^{\epsilon_{\mathrm{i}}} p_{X\mid Y}(x'\mid y),\\
&\mspace{57mu}\forall x,x'\in\mathcal{D}^n\colon x\sim x',y\in\mathcal{D}^n,
\end{split}\\
&\sum_{x\in \mathcal{D}^n} p_{X\mid Y}(x\mid y)=1,\mspace{36mu}\forall y\in\mathcal{D}^n,\\
&\mspace{9mu} p_{X\mid Y}(x\mid y)\geq 0,\mspace{54mu}\forall x,y\in\mathcal{D}^n,\\
\begin{split}\label{conPY}
&\sum_{y\in \mathcal{D}^n} p_{X\mid Y}(x\mid y)p_Y(y)=p_X(x),\\
&\mspace{213mu}\forall x\in\mathcal{D}^n,
\end{split}\\
&\mspace{9mu}p_Y(y)\geq 0,\mspace{120mu}\forall y\in\mathcal{D}^n.
\end{align}

\noindent{\bf The Privacy--Distortion Problem under Differential Privacy (PD-DP):}
\begin{align}
\underset{p_{Y\mid X}}{\text{min}}
\mspace{36mu}&\sum_{x\in\mathcal{D}^n}\sum_{y\in\mathcal{D}^n}p_X(x)p_{Y\mid X}(y\mid x)d(x,y)\nonumber\\ %\label{optDistortionPrivacy}\\
\begin{split}\label{conPrivacy}
\text{subject to}\mspace{18mu} &\mspace{9mu} p_{Y\mid X}(y\mid x)\le e^{\epsilon_{\mathrm{d}}} p_{Y\mid X}(y\mid x'),\\
&\mspace{57mu} \forall x,x'\in\mathcal{D}^n\colon x\sim x',y\in\mathcal{D}^n,
\end{split}\\
&\sum_{y\in \mathcal{D}^n} p_{Y\mid X}(y\mid x)=1,\mspace{36mu} \forall x\in\mathcal{D}^n,\\
&\mspace{9mu} p_{Y\mid X}(y\mid x)\geq 0,\mspace{54mu}\forall x,y\in\mathcal{D}^n.
\end{align}

For convenience, we first define two constants $\epsilon_X$ and $\widetilde{\epsilon}_X$ that only depend on the prior $p_X$. Let
\begin{equation}\label{eqEpsX}
\epsilon_X=\max_{x,x'\in\mathcal{D}^n:x\sim x'}\ln\frac{p_X(x)}{p_X(x')},
\end{equation}
which is the maximum prior probability difference between two neighboring databases. For $\epsilon_X$ to be finite, the distribution $p_X$ needs to have full support on $\mathcal{D}^n$, i.e., $p_X(x)>0$ for any $x\in\mathcal{D}^n$. To define $\widetilde{\epsilon}_X$, note that the prior $p_X$ puts some constraints on the posterior probabilities. We say $\{p_{X\mid Y}(x\mid y),x,y\in\mathcal{D}^n\}$ is \emph{feasible} if there exists a pmf $p_Y$ such that it is the marginal pmf of $Y$. Let $\widetilde{\epsilon}_X$ be the smallest $\epsilon$ such that the following posterior probabilities are feasible:
\begin{equation*}
p_{X\mid Y}(x\mid y)=\frac{e^{-\epsilon d(x,y)}}{\bigl(1+(m-1)e^{-\epsilon}\bigr)^n},\quad x,y\in\mathcal{D}^n.
\end{equation*}
For any $p_X$, $\widetilde{\epsilon}_X$ is finite since when $\epsilon\rightarrow +\infty$, the pmf $p_Y=p_X$ is the marginal pmf of $Y$.
Finally we consider the function
\begin{equation}
h^{-1}(D)=\ln\bigl(\frac{n}{D}-1\bigr)+\ln(m-1).
\end{equation}

\begin{figure}
\centering
\includegraphics[scale=0.45]{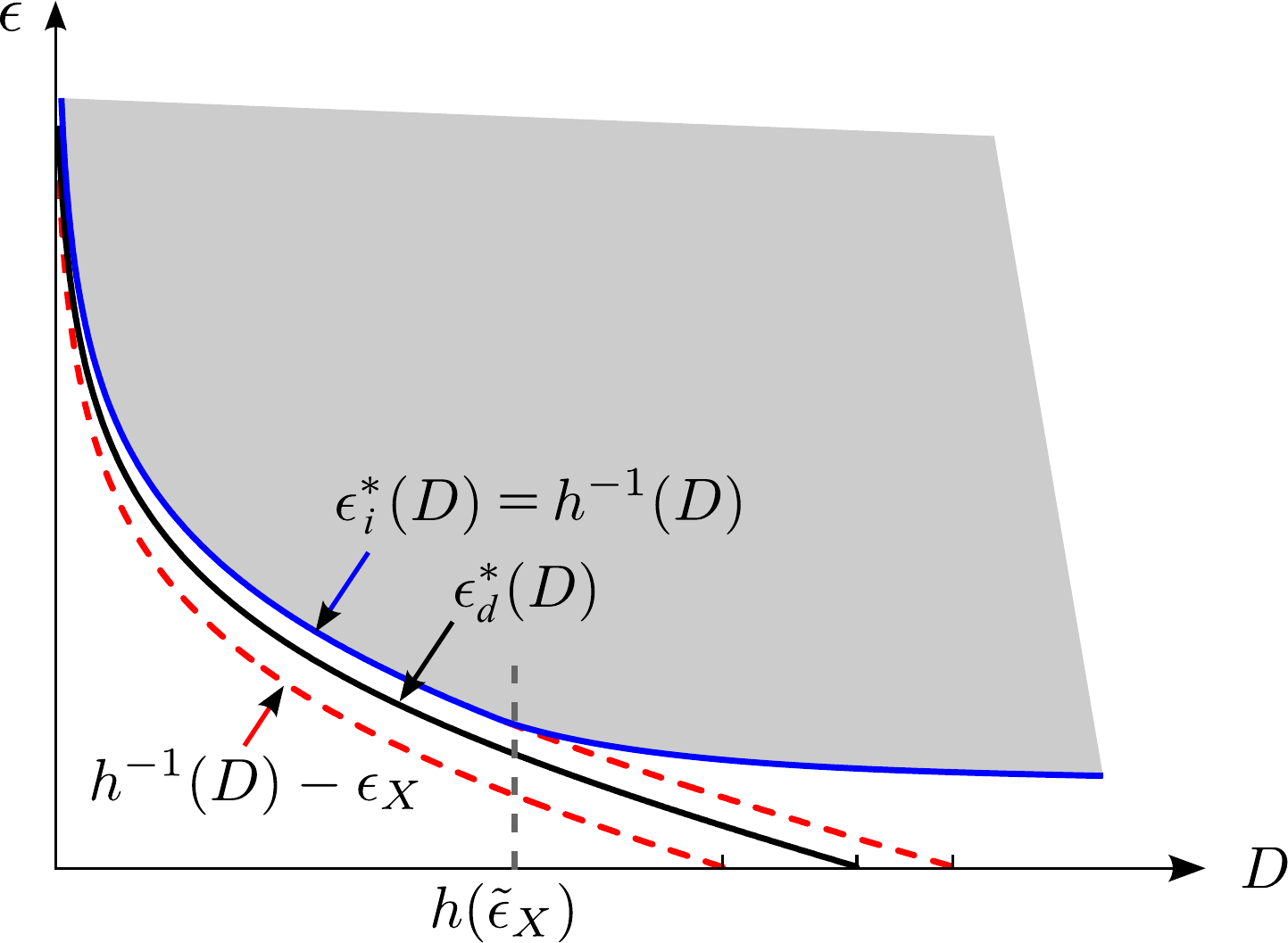}
\caption{The privacy--distortion function $\epsilon_{\mathrm{i}}^*$ under identifiability and $\epsilon_{\mathrm{d}}^*$ under differential privacy satisfy $\epsilon_{\mathrm{i}}^*(D)-\epsilon_X \leq \epsilon_{\mathrm{d}}^*(D)\leq \epsilon_{\mathrm{i}}^*(D)$ for $D$ in some range.}
\label{figBounds}
\end{figure}
Recall that $\epsilon_{\mathrm{i}}^*(D)$ denotes the best identifiability level under a maximum distortion $D$, and $\epsilon_{\mathrm{d}}^*(D)$ denotes the best differential privacy level under a maximum distortion $D$. The connection between the privacy--distortion functions $\epsilon_{\mathrm{i}}^*$ and $\epsilon_{\mathrm{d}}^*$ is established in the following theorem. See \figurename~\ref{figBounds} for an illustration.
\begin{theorem}\label{thm:IDvsDP}
For identifiability, the privacy--distortion function $\epsilon_{\mathrm{i}}^*$ of a database $X$ with $\epsilon_X<+\infty$ satisfies
\begin{equation}
\begin{cases}
\epsilon_{\mathrm{i}}^*(D) =h^{-1}(D),& 0\le D\le h(\widetilde{\epsilon}_X),\\
\epsilon_{\mathrm{i}}^*(D) \ge \max\{h^{-1}(D),\epsilon_X\},& h(\widetilde{\epsilon}_X)<D\le n.
\end{cases}\label{epsiloni}
\end{equation}
For differential privacy, the privacy--distortion function $\epsilon_{\mathrm{d}}^*$ of a database $X$ satisfies the following bounds for any $D$ with $0\le D\le n$:
\begin{equation}\label{epsilond}
\max\{h^{-1}(D)-\epsilon_X,0\}\le\epsilon_{\mathrm{d}}^*(D)\le \max\{h^{-1}(D),0\}.
\end{equation}
\hfill{$\square$}
\end{theorem}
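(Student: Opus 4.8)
The plan is to attack the two parts of the theorem in sequence, first pinning down $\epsilon_{\mathrm{i}}^*$ exactly on the low-distortion regime, then transferring information to $\epsilon_{\mathrm{d}}^*$ via the Bayes-rule relationship between the conditional pmf's.

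\emph{Step 1: Lower bound for $\epsilon_{\mathrm{i}}^*$.} Fix any mechanism satisfying $\epsilon$-identifiability. For each fixed $y$, the identifiability constraint says that along any chain of neighbors in $\mathcal{D}^n$, the posterior $p_{X\mid Y}(\cdot\mid y)$ changes multiplicatively by at most $e^{\epsilon}$ per step. Since the Hamming graph on $\mathcal{D}^n$ has diameter $n$ and each vertex has $n(m-1)$ neighbors, one can show that $p_{X\mid Y}(x\mid y)\le e^{-\epsilon\, d(x,y)}\cdot p_{X\mid Y}(y'\mid y)$-type inequalities force, after normalizing $\sum_x p_{X\mid Y}(x\mid y)=1$, a pointwise bound $p_{X\mid Y}(x\mid y)\ge e^{-\epsilon d(x,y)}/\bigl(1+(m-1)e^{-\epsilon}\bigr)^n$ is \emph{not} automatic, but the \emph{distortion} is minimized when the posterior is as concentrated as possible at $d(x,y)=0$. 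The key combinatorial fact is that, subject to the per-edge ratio constraint and normalization, $\sum_x p_{X\mid Y}(x\mid y)\,d(x,y)$ is minimized exactly by the product-form posterior $p_{X\mid Y}(x\mid y)=e^{-\epsilon d(x,y)}/(1+(m-1)e^{-\epsilon})^n$, whose conditional expected distortion equals $n(m-1)e^{-\epsilon}/(1+(m-1)e^{-\epsilon})$; call this $h(\epsilon)$. Averaging over $y$ gives $\mathbb{E}[d(X,Y)]\ge h(\epsilon)$, i.e. $\epsilon\ge h^{-1}(D)$, and since $h^{-1}$ is the stated function this yields $\epsilon_{\mathrm{i}}^*(D)\ge h^{-1}(D)$ for all $D$. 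The bound $\epsilon_{\mathrm{i}}^*(D)\ge\epsilon_X$ when $D>h(\widetilde{\epsilon}_X)$ should follow because any feasible posterior must be compatible with the prior $p_X$ through $\sum_y p_{X\mid Y}(x\mid y)p_Y(y)=p_X(x)$, and the definition of $\widetilde\epsilon_X$ as the smallest $\epsilon$ making the product-form posterior feasible means that for $D$ beyond $h(\widetilde\epsilon_X)$ one can no longer use the unconstrained optimizer; a separate argument comparing $p_X(x)$ and $p_X(x')$ on a worst-case neighboring pair gives the $\epsilon_X$ floor.

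\emph{Step 2: Matching upper bound for $\epsilon_{\mathrm{i}}^*$ on $[0,h(\widetilde\epsilon_X)]$.} Here I would exhibit, for each $\epsilon\ge\widetilde\epsilon_X$, a mechanism achieving identifiability level $\epsilon$ with distortion exactly $h(\epsilon)$: take the product-form posterior above, which is feasible by the definition of $\widetilde\epsilon_X$, let $p_Y$ be the witnessing marginal, and define $p_{Y\mid X}$ by Bayes' rule. One checks the identifiability constraint is met with equality on neighboring pairs (the ratio is exactly $e^{\pm\epsilon d}$ with $|d-d'|=1$), and the distortion is $h(\epsilon)$. As $\epsilon$ ranges over $[\widetilde\epsilon_X,\infty)$, $h(\epsilon)$ ranges over $(0,h(\widetilde\epsilon_X)]$, so for every $D$ in that interval there is a mechanism with $\epsilon=h^{-1}(D)$, giving $\epsilon_{\mathrm{i}}^*(D)\le h^{-1}(D)$ and hence equality.

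\emph{Step 3: From identifiability to differential privacy.} The link is $p_{Y\mid X}(y\mid x)=p_{X\mid Y}(x\mid y)p_Y(y)/p_X(x)$. For neighboring $x\sim x'$,
\[
\frac{p_{Y\mid X}(y\mid x)}{p_{Y\mid X}(y\mid x')}=\frac{p_{X\mid Y}(x\mid y)}{p_{X\mid Y}(x'\mid y)}\cdot\frac{p_X(x')}{p_X(x)},
\]
so an $\epsilon$-identifiable mechanism is automatically $(\epsilon+\epsilon_X)$-differentially private, and conversely an $\epsilon$-differentially private mechanism is $(\epsilon+\epsilon_X)$-identifiable (provided $p_X$ has full support, which is exactly the hypothesis $\epsilon_X<\infty$; for $\epsilon_{\mathrm{d}}^*$ itself no support assumption is needed since the DP constraint never references $p_X$). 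Combining the first direction with the lower bound from Step 1 gives $\epsilon_{\mathrm{d}}^*(D)\ge h^{-1}(D)-\epsilon_X$; combining the second with the construction in Step 2 — or more directly, observing that the product-form mechanism built in Step 2 is, by the Bayes identity, a DP mechanism whose DP level one can read off and bound above by $h^{-1}(D)$ — gives $\epsilon_{\mathrm{d}}^*(D)\le h^{-1}(D)$. Taking maxima with $0$ (privacy levels are nonnegative) yields \eqref{epsilond}.

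\emph{Main obstacle.} The crux is Step 1: proving that the product-form posterior is the \emph{distortion-minimizing} posterior among all posteriors satisfying the per-edge identifiability ratio and the normalization constraint, for each fixed $y$. This is an optimization over the simplex with exponentially many multiplicative edge constraints; I expect the cleanest route is a Lagrangian/KKT argument or an exchange argument showing that at the optimum every neighboring-pair constraint incident to a ``closer-to-$y$'' vertex is tight, which forces the geometric decay $e^{-\epsilon d(x,y)}$ and then normalization fixes the constant. A secondary subtlety is the feasibility bookkeeping around $\widetilde\epsilon_X$ — ensuring the optimizer is actually realizable by some $p_Y$ consistent with the prior, which is precisely why the exact characterization of $\epsilon_{\mathrm{i}}^*$ is only claimed on $[0,h(\widetilde\epsilon_X)]$ rather than all of $[0,n]$.
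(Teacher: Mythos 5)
Your overall architecture matches the paper's: a per-$y$ relaxed optimization gives the lower bounds, the product-form posterior gives the identifiability upper bound, and Bayes' rule transfers between the two notions with an $\epsilon_X$ penalty (your $\epsilon_X$-floor sketch for $\epsilon_{\mathrm{i}}^*$ is also the paper's Lemma: multiply the posterior constraint by $p_Y(y)$ and sum over $y$). But there are two genuine gaps. First, the claim you yourself flag as the ``main obstacle'' --- that every posterior on the simplex satisfying the per-edge ratio constraint $p_{X\mid Y}(x\mid y)\le e^{\epsilon}p_{X\mid Y}(x'\mid y)$ has conditional expected distortion at least $h(\epsilon)$ --- is the entire technical content of the theorem and is left unproved. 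A KKT or exchange argument is delicate here: only the optimal \emph{value} is needed, not the structure of the optimizer, and ``all constraints incident to closer-to-$y$ vertices are tight at the optimum'' is not something you establish. The paper instead groups $\mathcal{D}^n$ into Hamming shells $\mathcal{N}_l(y)$, sums the edge constraints over the bipartite neighbor structure between consecutive shells (counting $(n-l+1)(m-1)$ outward and $l$ inward neighbors) to get $P_{l-1}/N_{l-1}\le e^{\epsilon}P_l/N_l$, deduces $P_n\ge N_ne^{-n\epsilon}/\bigl(1+(m-1)e^{-\epsilon}\bigr)^n$, and then peels off the outermost shell and recurses on an $(n-1)$-shell subproblem. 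Some argument of this kind is indispensable.

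Second, your route to $\epsilon_{\mathrm{d}}^*(D)\le\max\{h^{-1}(D),0\}$ fails. The mechanism of your Step 2 has $p_{Y\mid X}(y\mid x)=p_Y(y)e^{-\epsilon d(x,y)}/\bigl[p_X(x)\bigl(1+(m-1)e^{-\epsilon}\bigr)^n\bigr]$, so for neighbors $x\sim x'$ the likelihood ratio is $e^{-\epsilon(d(x,y)-d(x',y))}\,p_X(x')/p_X(x)\le e^{\epsilon+\epsilon_X}$: reading off its DP level gives $\epsilon+\epsilon_X$, not $\epsilon$, and this mechanism exists only for $\epsilon\ge\widetilde{\epsilon}_X$, i.e. $D\le h(\widetilde{\epsilon}_X)$, whereas the bound is claimed for all $0\le D\le n$. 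Your alternative route via ``identifiable $\Rightarrow$ DP'' costs the same $+\epsilon_X$. The correct witness is the exponential mechanism in which the \emph{conditional} (not the posterior) is product-form, $p_{Y\mid X}(y\mid x)=e^{-\epsilon d(x,y)}/\bigl(1+(m-1)e^{-\epsilon}\bigr)^n$; this is exactly $\epsilon$-DP for every $\epsilon\ge 0$ and every prior, and still has distortion $h(\epsilon)$. (A smaller slip: for the DP lower bound you invoke the ``first direction,'' but what is needed is ``$\epsilon$-DP $\Rightarrow(\epsilon+\epsilon_X)$-identifiable,'' equivalently the relaxation of the DP constraint to the posterior-ratio constraint at level $\epsilon+\epsilon_X$; both directions appear in your display, so this is easily repaired.)
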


From the theorem above, we can see that $0\leq \epsilon_{\mathrm{i}}^*(D)-\epsilon_{\mathrm{d}}^*(D)\leq \epsilon_X$ when $0\le D\le h(\widetilde{\epsilon}_X)$. The lemmas needed in the proof of this theorem can be found in the appendix. Here we give a sketch of the proof, which consists of the following key steps:
\begin{itemize}
\item The first key step is to show that both PD-I and PD-DP, through (respective) relaxations as shown in \figurename~\ref{figOpt}, boil down to the same optimization problem.

\noindent{\bf Relaxed Privacy--Distortion (R-PD):}
\begin{align}
\underset{p_{X\mid Y},\mspace{3mu}p_Y}{\text{min}}\mspace{21mu}&\sum_{x\in\mathcal{D}^n}\sum_{y\in\mathcal{D}^n}p_Y(y)p_{X\mid Y}(x\mid y)d(x,y)\nonumber\\ %\label{optRelaxedDistortionPrivacy}\\
\begin{split}\label{conRelaxedPrivacy}
\text{subject to}\mspace{18mu} &\mspace{9mu} p_{X\mid Y}(x\mid y)\le e^{\epsilon} p_{X\mid Y}(x'\mid y),\\
&\mspace{57mu}\forall x,x'\in\mathcal{D}^n\colon x\sim x',y\in\mathcal{D}^n,
\end{split}\\
&\sum_{x\in \mathcal{D}^n} p_{X\mid Y}(x\mid y)=1,\mspace{36mu} \forall y\in\mathcal{D}^n,\label{conRelaxedNorm}\\
&\mspace{9mu} p_{X\mid Y}(x\mid y)\geq 0,\mspace{54mu} \forall x,y\in\mathcal{D}^n,\label{conRelaxedNonneg}\\
& \sum_{y\in\mathcal{D}^n}p_Y(y) = 1,\label{conRelaxedNormPY}\\
&\mspace{9mu} p_Y(y)\geq 0, \mspace{120mu} \forall y\in\mathcal{D}^n.\label{conRelaxedNonnegPY}
\end{align}

\begin{figure}
\centering
\includegraphics[scale=0.7]{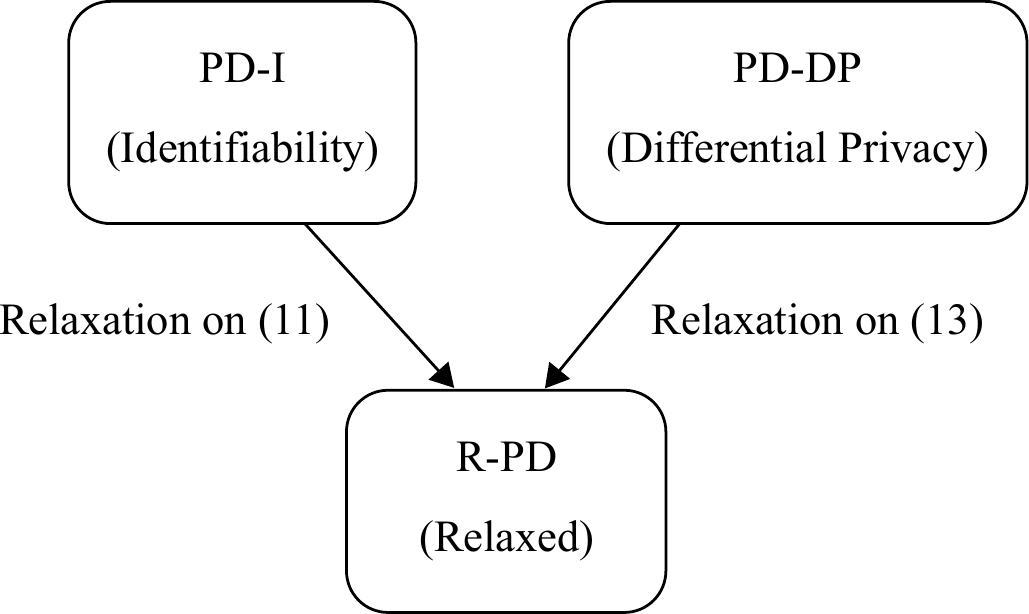}
\caption{Both PD-I and PD-DP boil down to R-PD through different relaxations.}
\label{figOpt}
\end{figure}
Relaxing the constraint \eqref{conPY} in PD-I to the constraint \eqref{conRelaxedNormPY} gives R-PD. Now consider PD-DP. For any neighboring $x,x'\in\mathcal{D}^n$, $p_X(x)\le e^{\epsilon_X}p_X(x')$ according to the definition of $\epsilon_X$, and a necessary condition for the constraint \eqref{conPrivacy} to be satisfied is
\begin{equation}\label{conPrivacyRel}
p_X(x)p_{Y\mid X}(y\mid x)\le e^{\epsilon_{\mathrm{d}}+\epsilon_X} p_X(x')p_{Y\mid X}(y\mid x').
\end{equation}
Therefore, replacing constraint \eqref{conPrivacy} with \eqref{conPrivacyRel} and letting $\epsilon=\epsilon_{\mathrm{d}}+\epsilon_X,$ we obtain R-PD. So R-PD can be regarded as a relaxation of both PD-I and PD-DP.

\item In Lemma~\ref{lem:R-PD}, we prove that the minimum distortion of R-PD is $D_{\mathrm{relaxed}}^*(\epsilon)=h(\epsilon)$, which gives lower bounds on the distortion--privacy functions under identifiability and under differential privacy. By the connection between distortion--privacy function and privacy--distortion function, Lemma~\ref{lem:lowerBounds} shows that $\epsilon_\mathrm{i}^*(D)\ge h^{-1}(D)$ and $\epsilon_{\mathrm{d}}^*(D)\ge h^{-1}(D)-\epsilon_X$ for any $D$ with $0\le D\le n$. Lemma~\ref{lem:lower2ID} shows another lower bound on $\epsilon_{\mathrm{i}}^*$, combining which with the lower bound in Lemma~\ref{lem:lowerBounds} gives the lower bound in Theorem~\ref{thm:IDvsDP}.

\item Consider the mechanism $\mathcal{E}_i$ specified by
\begin{equation}\label{eqPostExpMec}
p_{Y\mid X}(y\mid x)=\frac{p_Y(y)e^{-\epsilon d(x,y)}}{p_X(x)\bigl(1+(m-1)e^{-\epsilon}\bigr)^n},\quad x,y\in\mathcal{D}^n,
\end{equation}
where $\epsilon\ge\widetilde{\epsilon}_X$ and $p_Y$ is the corresponding pmf of $Y$. Lemma \ref{lem:DP-I} shows that the mechanism $\mathcal{E}_i$ guarantees an identifiability level of $\epsilon$ with distortion $h(\epsilon)$ when $\epsilon\ge\widetilde{\epsilon}_X,$  which yields \eqref{epsiloni} when combining with the lower bound above.

\item Consider the mechanism $\mathcal{E}_d$ specified by the conditional probabilities
\begin{equation}\label{eqExpMec}
p_{Y\mid X}(y\mid x)= \frac{e^{-\epsilon d(x,y)}}{\bigl(1+(m-1)e^{-\epsilon}\bigr)^n},\quad x,y\in\mathcal{D}^n,
\end{equation}
where $\epsilon\ge 0$. This is the exponential mechanism with score function $q=-d$ \cite{McSTal_07}. Lemma \ref{lemMecE} shows that the  mechanism $\mathcal{E}$ satisfies $\epsilon$-differential privacy with distortion $h(\epsilon),$ which provides the upper bound in \eqref{epsilond}.
\end{itemize}

\section{Identifiability versus Mutual-Information Privacy}
In this section, we discuss the connection between identifiability and mutual-information privacy. Intuitively, mutual information can be used to quantify the information about $X$ by observing a correlated random variable $Y$. Recall that the privacy--distortion function under mutual-information privacy denotes the smallest achievable mutual information without exceeding a maximum distortion. Note that this formulation has the same form as the formulation in the rate--distortion theory \cite{CovTho_06}, and thus the privacy--distortion function under mutual-information privacy is identical to the rate--distortion function in this setting. We will show that identifiability and mutual-information privacy are consistent under the privacy--distortion framework in the sense that given a maximum distortion $D$, there is a mechanism that minimizes the identifiability level $\epsilon_{\mathrm{i}}^*(D)$ and also achieves the minimum mutual information, for some range of $D$.

It has been pointed out in \cite{Mir_13} that the mechanism that achieves the optimal rate--distortion also guarantees a certain level of differential privacy. However, whether this differential privacy level is optimal or how far it is from optimal was not answered. Our result on the connection between identifiability and mutual-information privacy indicates that given a maximum distortion $D$, there is a mechanism that achieves the optimal rate--distortion and guarantees a differential privacy level $\epsilon$ such that $\epsilon_{\mathrm{d}}^*(D)\le\epsilon\le\epsilon_{\mathrm{d}}^*(D)+\epsilon_X$.

Given a maximum distortion $D$, the privacy--distortion function $\epsilon_{\mathrm{m}}^*(D)$ for input $X$ with pmf $p_X(\cdot)$ is given by the optimal value of the following convex optimization problem.

\noindent\textbf{The Privacy and Distortion Problem under Mutual-Information Privacy (PD-MIP):}
\begin{align}
\underset{p_{Y\mid X}}{\text{min}}
\mspace{36mu}&\mspace{9mu}I(X;Y)\nonumber\\ %\label{optRateDistortion}\\
\text{subject to}\mspace{18mu} &\sum_{x\in\mathcal{D}^n}\sum_{y\in\mathcal{D}^n}p_X(x)p_{Y\mid X}(y\mid x)d(x,y)\le D,\label{conDistortion}\\
&\sum_{y\in\mathcal{D}^n}p_{Y\mid X}(y\mid x)=1,\mspace{36mu}\forall x\in\mathcal{D}^n,\label{conNorm}\\
&\mspace{9mu}p_{Y\mid X}(y\mid x)\ge 0,\mspace{54mu}\forall x,y\in\mathcal{D}^n.\label{conNonneg}
\end{align}

\begin{theorem}
For any $D$ with $0\le D\le h(\widetilde{\epsilon}_X)$, the identifiability optimal mechanism $\mathcal{E}_i$ defined in \eqref{eqPostExpMec} is also mutual-information optimal. \hfill{$\square$}
\end{theorem}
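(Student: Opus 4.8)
The plan is to use the identification, already noted in the text, of PD-MIP with the rate--distortion problem, so that $\epsilon_{\mathrm{m}}^*(D)=R(D):=\min I(X;Y)$, the minimum being over all channels $p_{Y\mid X}$ with $\mathbb{E}[d(X,Y)]\le D$. It then suffices to (a) produce a lower bound on $I(X;Y)$ valid for every feasible channel, and (b) verify that the mechanism $\mathcal{E}_i$ attains it. Write $\epsilon=h^{-1}(D)$ throughout. Since $h$ is decreasing, the hypothesis $0\le D\le h(\widetilde{\epsilon}_X)$ is equivalent to $\epsilon\ge\widetilde{\epsilon}_X$, and as $\widetilde{\epsilon}_X\ge 0$ we also have $\epsilon\ge 0$; by Lemma~\ref{lem:DP-I} the mechanism $\mathcal{E}_i$ of \eqref{eqPostExpMec} is then well defined and feasible, with distortion exactly $h(\epsilon)=D$.

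For the lower bound I would run the standard tilted-auxiliary-channel argument. Fix $s\ge 0$ and set $q_{X\mid Y}(x\mid y)=e^{-s\,d(x,y)}/Z(s)$ with $Z(s)=\sum_{x\in\mathcal{D}^n}e^{-s\,d(x,y)}=\bigl(1+(m-1)e^{-s}\bigr)^n$; the key structural point is that $Z(s)$ does not depend on $y$, which holds because the Hamming distance decomposes coordinatewise and is symmetric, so $q_{X\mid Y}(\cdot\mid y)$ is a genuine pmf. Then for any feasible joint law,
\begin{equation*}
I(X;Y)=H(X)-H(X\mid Y)\ge H(X)+\sum_{x,y}p_{X,Y}(x,y)\log q_{X\mid Y}(x\mid y),
\end{equation*}
since $H(X\mid Y)\le-\sum_{x,y}p_{X,Y}(x,y)\log q_{X\mid Y}(x\mid y)$, the gap being the nonnegative conditional divergence $D(p_{X\mid Y}\,\|\,q_{X\mid Y}\mid p_Y)$. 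Substituting the form of $q_{X\mid Y}$ and using $\mathbb{E}[d(X,Y)]\le D$ together with $s\ge 0$ yields
\begin{equation*}
I(X;Y)\ge H(X)-sD-n\log\bigl(1+(m-1)e^{-s}\bigr)\qquad\text{for every }s\ge 0.
\end{equation*}
The right-hand side is concave in $s$ and its derivative vanishes exactly when $(m-1)e^{-s}/\bigl(1+(m-1)e^{-s}\bigr)=D/n$, i.e. at $s=\ln(\tfrac{n}{D}-1)+\ln(m-1)=h^{-1}(D)=\epsilon$, which is admissible on the stated range; hence $R(D)\ge H(X)-\epsilon D-n\log\bigl(1+(m-1)e^{-\epsilon}\bigr)$.

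It remains to evaluate $I(X;Y)$ under $\mathcal{E}_i$. By \eqref{eqPostExpMec} and Bayes' rule, the posterior it induces is $p_{X\mid Y}(x\mid y)=e^{-\epsilon d(x,y)}/\bigl(1+(m-1)e^{-\epsilon}\bigr)^n$, which is exactly $q_{X\mid Y}(x\mid y)$ at $s=\epsilon$; therefore the conditional divergence vanishes and the first displayed inequality is an equality, while $\mathbb{E}[d(X,Y)]=h(\epsilon)=D$ makes the second an equality as well. Thus $\mathcal{E}_i$ attains $I(X;Y)=H(X)-\epsilon D-n\log\bigl(1+(m-1)e^{-\epsilon}\bigr)$, which matches the lower bound on $R(D)$, so $\mathcal{E}_i$ is mutual-information optimal. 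I do not expect a serious obstacle: this is routine rate--distortion theory once one makes the decisive observation that the tilted kernel with parameter $s=\epsilon$ is precisely the backward channel of $\mathcal{E}_i$, which is what forces both inequalities to be tight. The only points needing care are the $y$-independence of $Z(s)$ (coming from the product and symmetric structure of the Hamming distance) and keeping the hypothesis $\epsilon\ge\widetilde{\epsilon}_X$ in force so that Lemma~\ref{lem:DP-I} applies and $\mathcal{E}_i$ is a legitimate mechanism with the claimed distortion.
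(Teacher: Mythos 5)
Your proof is correct, but it takes a genuinely different route from the paper. The paper treats PD-MIP as a convex program and exhibits a primal--dual certificate: it plugs in the channel of $\mathcal{E}_i$ with $\epsilon=h^{-1}(D)$, supplies explicit multipliers $\lambda=h^{-1}(D)$, $\mu(x)=p_X(x)\ln\bigl[p_X(x)\bigl(1+(m-1)e^{-\epsilon}\bigr)^n\bigr]$, $\eta\equiv 0$, checks stationarity and complementary slackness, and invokes Slater's condition plus convexity for sufficiency of the KKT conditions. You instead run the classical Shannon-lower-bound argument for rate--distortion: bound $H(X\mid Y)$ by the cross-entropy against the tilted backward kernel $q_{X\mid Y}\propto e^{-s\,d(x,y)}$ (legitimate because the normalizer $\bigl(1+(m-1)e^{-s}\bigr)^n$ is $y$-independent for Hamming distance), obtain the family of bounds $I(X;Y)\ge H(X)-sD-n\log\bigl(1+(m-1)e^{-s}\bigr)$ for $s\ge 0$, and note that $\mathcal{E}_i$'s backward channel is exactly this kernel at $s=\epsilon$ with distortion exactly $D$, so both inequalities are tight. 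The two arguments are dual to one another --- your $s$ is the paper's multiplier $\lambda$, and your lower bound is essentially the Lagrangian dual function evaluated at the paper's multipliers --- but yours is self-contained, avoids differentiating the Lagrangian and citing KKT sufficiency, and makes transparent why the restriction $D\le h(\widetilde{\epsilon}_X)$ matters: it is precisely the condition under which the optimal test channel is realizable as an actual mechanism. The only unargued step is your assertion that $\widetilde{\epsilon}_X\ge 0$ (needed so the optimizing $s=h^{-1}(D)$ lies in the admissible range $s\ge 0$); this does hold --- e.g.\ it follows from $\widetilde{\epsilon}_X\ge\epsilon_X\ge 0$, using Lemma~\ref{lem:lower2ID} together with the equality part of \eqref{epsiloni} at $D=h(\widetilde{\epsilon}_X)$, and $\epsilon_X\ge 0$ by symmetry of the neighboring relation --- but you should say so rather than take it for granted.
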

\begin{proof}
Consider the Lagrangian and the KKT conditions of the optimization problem PD-MIP. For any $0\le D\le h(\widetilde{\epsilon}_X)$, consider the conditional probabilities $\{p_{Y\mid X}(y\mid x),x,y\in\mathcal{D}^n\}$ in \eqref{eqPostExpMec} under $\mathcal{E}_i$ with $\epsilon=h^{-1}(D)$. Then it is easy to verify that $\{p_{Y\mid X}(y\mid x),x,y\in\mathcal{D}^n\}$ is primal feasible. Let $\lambda=h^{-1}(D)$, $\mu(x)=p_X(x)\ln\bigl[ p_X(x)\bigl(1+(m-1)e^{-\epsilon}\bigr)^n\bigr]$ with $ x\in\mathcal{D}^n$ and $\eta(x,y)=0$ with $x,y\in\mathcal{D}^n$ be the Lagrange multipliers for \eqref{conDistortion}, \eqref{conNorm} and \eqref{conNonneg}, respectively. Then these multipliers are dual feasible. The stationarity condition $p_X(x)\ln{p_{Y\mid X}(y\mid x)}-p_X(x)\ln{p_Y(y)}+\lambda p_X(x)d(x,y)+\mu(x)-\eta(x,y)=0$ (derived in Appendix~\ref{secProofs}) and the complementary slackness condition are also satisfied. Therefore, the above $(p_{Y\mid X},\lambda,\mu,\eta)$ satisfies the KKT conditions of PD-MIP.

Slater's condition holds for the problem PD-MIP since all the inequality constraints are affine \cite{BoyVan_04}. By convexity, the KKT conditions are sufficient for optimality. Therefore the mechanism $\mathcal{E}_i$ with $\epsilon=h^{-1}(D)$ also gives the smallest mutual information.
\end{proof}

\section{Conclusions}
In this paper, we investigated the relation between three different notions of privacy: identifiability, differential privacy and mutual-information privacy, where identifiability provides absolute indistinguishability, differential privacy guarantees limited additional information leakage, and mutual information is an information theoretic notion of privacy. Under a unified privacy--distortion framework, where the distortion is defined to be the Hamming distance between the input and output databases, we established some fundamental connections between these three privacy notions. Given a maximum distortion $D$ within some range, the smallest identifiability level $\epsilon_{\mathrm{i}}^*(D)$ and the smallest differential privacy level $\epsilon_{\mathrm{d}}^*(D)$ are proved to satisfy $\epsilon_{\mathrm{i}}^*(D)-\epsilon_X \leq \epsilon_{\mathrm{d}}^*(D)\leq \epsilon_{\mathrm{i}}^*(D)$, where $\epsilon_X$ is a constant depending on the distribution of the original database, and is equal to zero when the distribution is uniform. Furthermore, we showed that identifiability and mutual-information privacy are consistent in the sense that given the same maximum distortion $D$ within some range, there is a mechanism that simultaneously optimizes the identifiability level and the mutual-information privacy.

Our findings in this study reveal some fundamental connections between the three notions of privacy. With these three notions of privacy being defined, many interesting issues deserve further attention. For example, in some cases, the prior $p_X$ is imperfect, and it is natural to ask how we can protect privacy with robustness over the prior distribution. Some other interesting directions include the generalization from ``pairwise'' privacy to ``group'' privacy, which arises from the pairwise requirements that both identifiability and differential privacy impose on neighboring databases. The connections between membership privacy and these three notions of privacy also need to be explored, since membership privacy has been proposed as a unifying framework for privacy definitions.

\appendix
\section{Proofs}\label{secProofs}
\begin{lemma}\label{lem:R-PD}
The minimum distortion $D_\mathrm{relaxed}^*(\epsilon)$ of the relaxed optimization problem R-PD satisfies
\begin{equation}\label{eqLowerBound}
D_{\mathrm{relaxed}}^*(\epsilon) = h(\epsilon),
\end{equation}
where
\begin{equation}
h(\epsilon) = \frac{n}{1+\frac{e^{\epsilon}}{m-1}}.
\end{equation}
\end{lemma}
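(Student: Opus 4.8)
The plan is to exploit the product (Hamming-cube) structure of $\mathcal D^n$ to collapse R-PD onto a one-dimensional problem. First observe that the relaxed problem \emph{decouples}: R-PD places no constraint linking $p_Y$ to the conditional pmfs $p_{X\mid Y}(\cdot\mid y)$, nor any constraint linking the conditionals for different $y$. Hence
\[
D_{\mathrm{relaxed}}^*(\epsilon)=\min_{p_Y}\sum_{y\in\mathcal D^n}p_Y(y)\,v(y),\qquad
v(y):=\min_{q}\ \sum_{x\in\mathcal D^n}q(x)\,d(x,y),
\]
where $q$ ranges over pmfs on $\mathcal D^n$ with $q(x)\le e^{\epsilon}q(x')$ for every pair $x\sim x'$. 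By the coordinate-relabeling symmetry of the Hamming space (for any $y$ there is a distance-preserving bijection of $\mathcal D^n$ carrying $y$ to a fixed reference point $y_0$, and it preserves the feasible set of the inner problem), $v(y)$ is independent of $y$; call it $v^*$. Thus $D_{\mathrm{relaxed}}^*(\epsilon)=v^*$, and it remains to prove $v^*=h(\epsilon)$.

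For the upper bound $v^*\le h(\epsilon)$, I would take the ``exponential'' pmf $q^*(x)=e^{-\epsilon d(x,y_0)}/\bigl(1+(m-1)e^{-\epsilon}\bigr)^n$. Feasibility follows from the triangle inequality $|d(x,y_0)-d(x',y_0)|\le d(x,x')=1$, which gives $q^*(x)/q^*(x')=e^{-\epsilon(d(x,y_0)-d(x',y_0))}\le e^{\epsilon}$ for neighbors (using $\epsilon\ge 0$); that $q^*$ is a pmf is the identity $\sum_{k=0}^{n}\binom nk(m-1)^k e^{-\epsilon k}=\bigl(1+(m-1)e^{-\epsilon}\bigr)^n$. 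Writing $t=(m-1)e^{-\epsilon}$ and grouping points by their distance $k$ from $y_0$ (there are $\binom nk(m-1)^k$ of them), $\sum_x q^*(x)d(x,y_0)=(1+t)^{-n}\sum_k k\binom nk t^k=(1+t)^{-n}\,nt(1+t)^{n-1}=nt/(1+t)=h(\epsilon)$.

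For the lower bound $v^*\ge h(\epsilon)$, the key step is to aggregate a feasible $q$ onto spheres. Let $S_k=\{x:d(x,y_0)=k\}$ and $w_k=\sum_{x\in S_k}q(x)$, so $\sum_k w_k=1$ and $\sum_x q(x)d(x,y_0)=\sum_k k\,w_k$. Summing the feasibility constraint $q(x)\le e^{\epsilon}q(x')$ over all edges $\{x,x'\}$ with $x\in S_{k-1}$ and $x'\in S_k$ (each point of $S_{k-1}$ being incident to $(n-k+1)(m-1)$ such edges and each point of $S_k$ to $k$) gives $(n-k+1)(m-1)w_{k-1}\le e^{\epsilon}k\,w_k$, i.e. $w_k\ge\frac{(n-k+1)t}{k}w_{k-1}$. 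Substituting $w_k=\binom nk t^k z_k$ and using $(n-k+1)\binom{n}{k-1}=k\binom nk$, this is precisely $z_k\ge z_{k-1}$, so $(z_k)$ is nondecreasing. Now apply Chebyshev's sum inequality with the binomial weights $P_k=\binom nk t^k/(1+t)^n$ (whose mean is $nt/(1+t)$) to the comonotone sequences $k$ and $z_k$:
\[
\sum_k P_k\,k\,z_k\ \ge\ \Bigl(\sum_k P_k\,k\Bigr)\Bigl(\sum_k P_k\,z_k\Bigr).
\]
Since $\sum_k P_k z_k=(1+t)^{-n}\sum_k w_k=(1+t)^{-n}$ and $\sum_k P_k\,k z_k=(1+t)^{-n}\sum_k k\,w_k$, this rearranges to $\sum_k k\,w_k\ge nt/(1+t)=h(\epsilon)$, as desired.

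The main obstacle is the lower bound, and specifically the realization that after collapsing the exponentially many variables $q(x)$ onto the $n+1$ sphere masses and rescaling by $\binom nk t^k$, the only surviving constraint is monotonicity of $(z_k)$ — so the whole bound reduces to a one-dimensional correlation inequality. An alternative would be to solve the resulting $(n+1)$-variable linear program directly, either by an exchange argument that pushes mass toward $y_0$ until every inequality $w_k=\frac{(n-k+1)t}{k}w_{k-1}$ is tight, or by exhibiting a matching dual certificate; but the Chebyshev route avoids the case analysis. I would also check carefully the decoupling claim, the symmetry reduction to a single reference point $y_0$, and the edge-count identity $(n-k+1)\binom{n}{k-1}=k\binom nk$, which is used in both directions of the argument.
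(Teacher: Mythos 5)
Your proof is correct. The upper bound is identical to the paper's (the exponential conditional pmf, shown feasible via the triangle inequality and evaluated by summing over Hamming spheres), and your lower bound begins the same way the paper's does: aggregate a feasible conditional onto the spheres $S_k$ around $y$ and double-count edges between consecutive spheres to get $(n-k+1)(m-1)w_{k-1}\le e^{\epsilon}k\,w_k$, which is exactly the paper's inequality $P_{l-1}/N_{l-1}\le e^{\epsilon}P_l/N_l$. Where you genuinely diverge is the finishing step. The paper proceeds by iteration: it lower-bounds the outermost mass $P_n$, strips it off, and recursively solves a nested sequence of linear programs over fewer and fewer spheres, tightening one layer at a time until the exponential profile emerges. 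You instead note that after the substitution $w_k=\binom{n}{k}t^k z_k$ the ratio constraints say precisely that $(z_k)$ is nondecreasing, so the bound $\sum_k k\,w_k\ge nt/(1+t)=h(\epsilon)$ follows from a single application of the weighted Chebyshev sum inequality with the binomial weights $\binom{n}{k}t^k/(1+t)^n$ applied to the comonotone sequences $k$ and $z_k$. This is a cleaner closing argument: it replaces the paper's peeling induction with a one-line correlation inequality and makes transparent that the extremal profile is the one with $z_k$ constant, i.e., the exponential mechanism. Your explicit decoupling of R-PD over $y$ (which the paper leaves implicit by proving the sphere bound for each fixed $y$ separately) and the symmetry reduction to a single reference point $y_0$ are both valid, and the edge-count identity $(n-k+1)\binom{n}{k-1}=k\binom{n}{k}$ that you flag for checking is indeed correct.
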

\begin{proof}
We first prove the following claim, which gives a lower bound on the minimum distortion $D_\mathrm{relaxed}^*(\epsilon)$.

\begin{claim*}\label{claimLowerBoundFeasible}
Any feasible solution $\{p_{X\mid Y}(x\mid y),x,y\in\mathcal{D}^n\}$ of R-PD satisfies
\begin{equation}
\sum_{x\in\mathcal{D}^n}p_{X\mid Y}(x\mid y)d(x,y)\ge h(\epsilon).
\end{equation}
\end{claim*}

\noindent\textbf{Proof of the Claim.} Consider any feasible $\{p_{X\mid Y}(x\mid y),x,y\in\mathcal{D}^n\}$. For any $y\in\mathcal{D}^n$ and any integer $l$ with $0\le l\le n$, let $\mathcal{N}_l(y)$ be the set of elements with distance $l$ to $y$, i.e.,
\begin{equation}\label{eqNl}
\mathcal{N}_l(y)=\{v\in\mathcal{D}^n\colon d(v,y)=l\}.
\end{equation}
Denote $P_l=\Pr\{X\in\mathcal{N}_l(y)\mid Y=y\}$. Then
\begin{equation*}
\sum_{x\in\mathcal{D}^n}p_{X\mid Y}(x\mid y)d(x,y)=\sum_{l=0}^{n}lP_l.
\end{equation*}

We first derive a lower bound on $P_n$. For any $u\in\mathcal{N}_{l-1}(y)$, $\mathcal{N}_1(u)\cap\mathcal{N}_l(y)$ consists of the neighbors of $u$ that are in $\mathcal{N}_l(y)$. By the constraint \eqref{conRelaxedPrivacy}, for any $v\in\mathcal{N}_1(u)\cap\mathcal{N}_l(y)$,
\begin{equation}\label{ineqPostRelaxed}
p_{X\mid Y}(u\mid y)\le e^{\epsilon}p_{X\mid Y}(v\mid y).
\end{equation}
Each $u\in\mathcal{N}_{l-1}(y)$ has $n-(l-1)$ rows that are the same with the corresponding rows of $y$. Each neighbor of $u$ in $\mathcal{N}_l(y)$ can be obtained by changing one of these $n-(l-1)$ rows to a different element in $\mathcal{D}$, which is left with $m-1$ choices. Therefore, each $u\in\mathcal{N}_{l-1}(y)$ has $(n-l+1)(m-1)$ neighbors in $\mathcal{N}_l(y)$. By similar arguments, each $v\in\mathcal{N}_l(y)$ has $l$ neighbors in $\mathcal{N}_{l-1}(y)$. Taking summation of \eqref{ineqPostRelaxed} over $u\in\mathcal{N}_{l-1}(y),v\in\mathcal{N}_l(y)$ with $u\sim v$ yields
\begin{align*}
&\sum_{u\in\mathcal{N}_{l-1}(y)}(n-l+1)(m-1)p_{X\mid Y}(u\mid y)\nonumber\\
&\le e^{\epsilon}\sum_{u\in\mathcal{N}_{l-1}(y)}\mspace{3mu}\sum_{v\in\mathcal{N}_1(u)\cap\mathcal{N}_l(y)}p_{X\mid Y}(v\mid y).
\end{align*}
Thus
\begin{align}
&\mspace{23mu}(n-l+1)(m-1)P_{l-1}\nonumber\\
&\le e^{\epsilon}\sum_{v\in\mathcal{N}_l(y)}\mspace{3mu}\sum_{u\in\mathcal{N}_1(v)\cap\mathcal{N}_{l-1}(y)}p_{X\mid Y}(v\mid y)\\
&=e^{\epsilon}lP_l.\label{ineqPNeigh}
\end{align}
Recall that $N_l\triangleq|\mathcal{N}_l(x)|=\binom{n}{l}(m-1)^l$. Then by \eqref{ineqPNeigh} we obtain that, for any $l$ with $1\le l\le n$,
\begin{equation*}
\frac{P_{l-1}}{N_{l-1}}\le \frac{P_l}{N_l}e^{\epsilon}.
\end{equation*}
As a consequence, for any $l$ with $0\le l\le n$,
\begin{equation}\label{ineqPIne}
P_l\le \frac{N_l}{N_n}e^{(n-l)\epsilon}P_n.
\end{equation}
Since $\sum_{l=0}^{n}P_l=1$, taking summation over $l$ in \eqref{ineqPIne} yields
\begin{align*}
1&\le P_n\frac{1}{N_n e^{-n\epsilon}}\sum_{l=0}^{n}N_l e^{-l\epsilon}\\
&=P_n\frac{\bigl(1+(m-1)e^{-\epsilon}\bigr)^n}{N_ne^{-n\epsilon}},
\end{align*}
i.e.,
\begin{equation*}
P_n\ge\frac{N_ne^{-n\epsilon}}{\bigl(1+(m-1)e^{-\epsilon}\bigr)^n}.
\end{equation*}
This lower bound on $P_n$ gives the following lower bound:
\begin{align*}
\sum_{l=0}^{n}lP_l&\ge \sum_{l=0}^n l\biggl(P_l+a\frac{N_le^{-l\epsilon}}{\sum_{k=0}^{n-1}N_ke^{-k\epsilon}}\biggr)\nonumber\\
&\mspace{23mu}+\frac{nN_ne^{-n\epsilon}}{\bigl(1+(m-1)e^{-\epsilon}\bigr)^n},
\end{align*}
where $a=P_n-\frac{N_ne^{-n\epsilon}}{\bigl(1+(m-1)e^{-\epsilon}\bigr)^n}$.

Consider the following optimization problem:
\begin{equation*}
\begin{aligned}
\text{min}\mspace{18mu} &&& \sum_{l=0}^{n-1}lQ_l\\
\text{subject to} &&& Q_l \ge 0,\mspace{73mu} l=0,1,\dots,n-1,\\
&&& \frac{Q_{l-1}}{N_{l-1}}\le\frac{Q_l}{N_l}e^{\epsilon},\mspace{18mu} l=1,2,\dots,n-1,\\
&&& \sum_{l=0}^{n-1}Q_l=1-\frac{N_ne^{-n\epsilon}}{\bigl(1+(m-1)e^{-\epsilon}\bigr)^n}.
\end{aligned}
\end{equation*}
Suppose the optimal solution of this problem is $\{Q_0^*,Q_1^*,\dots,Q_{n-1}^*\}$. Then
\begin{equation*}
\sum_{l=0}^{n-1} l\biggl(P_l+a\frac{N_le^{-l\epsilon}}{\sum_{k=0}^{n-1}N_ke^{-k\epsilon}}\biggr)\ge\sum_{l=0}^{n-1}lQ_l^*
\end{equation*}
as $\Bigl\{P_l+a\frac{N_le^{-l\epsilon}}{\sum_{k=0}^{n-1}N_ke^{-k\epsilon}},l=0,1,\dots,n-1\Bigr\}$ is a feasible solution. Therefore,
\begin{equation*}
\sum_{l=0}^{n}lP_l\ge\sum_{l=0}^{n-1}lQ^*_l+\frac{nN_ne^{-n\epsilon}}{\bigl(1+(m-1)e^{-\epsilon}\bigr)^n}.
\end{equation*}

Similar to $\{P_l,l=0,\dots,n\}$, $\{Q_l^*,l=0,\dots,n-1\}$ satisfies
\begin{equation}\label{ineqQIne}
Q_l^*\le \frac{N_l}{N_{n-1}}e^{(n-1-l)\epsilon}Q_{n-1}^*.
\end{equation}
Since $\sum_{l=0}^{n-1}Q_l^*=1-\frac{N_ne^{-n\epsilon}}{\bigl(1+(m-1)e^{-\epsilon}\bigr)^n}$, taking summation over $l$ in \eqref{ineqQIne} yields
\begin{equation*}
Q_{n-1}^*\ge\frac{N_{n-1}e^{-(n-1)\epsilon}}{\bigl(1+(m-1)e^{-\epsilon}\bigr)^n}.
\end{equation*}
Using similar arguments we have
\begin{equation*}
\sum_{l=0}^{n-1}lQ_l^*\ge\sum_{l=0}^{n-2}lC_l^*+\frac{(n-1)N_{n-1}e^{-(n-1)\epsilon}}{\bigl(1+(m-1)e^{-\epsilon}\bigr)^n},
\end{equation*}
where $\{C_l^*,l=0,\dots,n-2\}$ is the optimal solution of
\begin{equation*}
\begin{aligned}
\text{min}\mspace{18mu} &&& \sum_{l=0}^{n-2}lC_l\\
\text{subject to} &&& C_l \ge 0,\mspace{74mu}l=0,1,\dots,n-2,\\
&&& \frac{C_{l-1}}{N_{l-1}}\le\frac{C_l}{N_l}e^{\epsilon},\mspace{18mu} l=1,2,\dots,n-2,\\
&&& \sum_{l=0}^{n-2}C_l=1-\frac{N_{n-1}e^{-(n-1)\epsilon}}{\bigl(1+(m-1)e^{-\epsilon}\bigr)^n}\\
&&&\mspace{86mu}-\frac{N_{n}e^{-n\epsilon}}{\bigl(1+(m-1)e^{-\epsilon}\bigr)^n}.
\end{aligned}
\end{equation*}

Continue this procedure we obtain
\begin{equation*}
\sum_{l=0}^{n}lP_l\ge \sum_{l=0}^{n}\frac{lN_le^{-(n-l)\epsilon}}{\bigl(1+(m-1)e^{-\epsilon}\bigr)^n}=\frac{n}{1+\frac{e^{\epsilon}}{m-1}}=h(\epsilon).
\end{equation*}
Therefore, for any feasible $\{p_{X\mid Y}(x\mid y),x,y\in\mathcal{D}^n\}$,
\begin{equation*}
\sum_{x\in\mathcal{D}^n}p_{X\mid Y}(x\mid y)d(x,y)=\sum_{l=0}^{n}lP_l\ge h(\epsilon),
\end{equation*}
which completes the proof of the claim.

By this claim, any feasible solution satisfies
\begin{equation*}
\sum_{x\in\mathcal{D}^n}\sum_{y\in\mathcal{D}^n}p_Y(y)p_{X\mid Y}(x\mid y)d(x,y)\ge h(\epsilon).
\end{equation*}
Therefore
\begin{equation}\label{eqDlowerbound}
D^*_\mathrm{relaxed}(\epsilon) \ge h(\epsilon).
\end{equation}
Next we prove the following claim, which gives an upper bound on the minimum distortion $D^*_\mathrm{relaxed}(\epsilon)$.
\begin{claim*}
Consider
\begin{equation}
p_{X\mid Y}(x\mid y)=\frac{e^{-\epsilon d(x,y)}}{\bigl(1+(m-1)e^{-\epsilon}\bigr)^n},\quad x,y\in\mathcal{D}^n,
\end{equation}
and any $\{p_Y(y),y\in\mathcal{D}^n\}$ with
\begin{equation*}
\sum_{y\in\mathcal{D}^n}p_Y(y)=1,\qquad p_Y(y)\ge 0,\quad\forall y\in\mathcal{D}^n.
\end{equation*}
Then $\{p_{X\mid Y}(x\mid y),x,y\in\mathcal{D}^n\}$ and $\{p_Y(y),y\in\mathcal{D}^n\}$ form a feasible solution of R-PD, and
\begin{equation}
\sum_{x\in\mathcal{D}^n}\sum_{y\in\mathcal{D}^n}p_Y(y)p_{X\mid Y}(x\mid y)d(x,y) = h(\epsilon).
\end{equation}
\end{claim*}

\noindent\textbf{Proof of the Claim.} Obviously the considered $\{p_{X\mid Y}(x\mid y),x,y\in\mathcal{D}^n\}$ and $\{p_Y(y),y\in\mathcal{D}^n\}$ satisfy constraints \eqref{conRelaxedNonneg}--\eqref{conRelaxedNonnegPY}. Therefore to prove the feasibility, we are left with constraint \eqref{conRelaxedPrivacy} and \eqref{conRelaxedNorm}. We first verify constraint \eqref{conRelaxedPrivacy}. Consider any pair of neighboring elements $x,x'\in\mathcal{D}^n$ and any $y\in\mathcal{D}^n$. Then by the triangle inequality,
\begin{equation*}
d(x,y)\le d(x',y)-d(x',x)=d(x',y)-1.
\end{equation*}
Therefore,
\begin{align*}
p_{X\mid Y}(x\mid y) &= \frac{e^{-\epsilon d(x,y)}}{\bigl(1+(m-1)e^{-\epsilon}\bigr)^n}\\
&\le \frac{e^{-\epsilon(d(x',y)-1)}}{\bigl(1+(m-1)e^{-\epsilon}\bigr)^n}\\
&=e^{\epsilon}p_{X\mid Y}(x'\mid y).
\end{align*}
Next we verify constraint \eqref{conRelaxedNorm}. For any $y\in\mathcal{D}^n$ and any integer $l$ with $0\le l\le n$, let $\mathcal{N}_l(x)$ be the set of elements with distance $l$ to $y$ as defined in \eqref{eqNl}. Then it is easy to see that $N_l\triangleq|\mathcal{N}_l(y)|=\binom{n}{l}(m-1)^l$, and for any $y\in\mathcal{D}^n$,
\begin{equation*}
\mathcal{D}^n=\bigcup_{l=0}^n\mathcal{N}_l(y).
\end{equation*}
Therefore, for any $y\in\mathcal{D}^n$,
\begin{align*}
&\mspace{21mu}\sum_{x\in\mathcal{D}^n} p_{X\mid Y}(x\mid y)\nonumber\\
&=\sum_{x\in\mathcal{D}^n}\frac{e^{-\epsilon d(x,y)}}{\bigl(1+(m-1)e^{-\epsilon}\bigr)^n}\\
&=\frac{1}{\bigl(1+(m-1)e^{-\epsilon}\bigr)^n}\sum_{l=0}^{n}\sum_{x\in\mathcal{N}_l(y)}e^{-\epsilon d(x,y)}\\
&=\frac{1}{\bigl(1+(m-1)e^{-\epsilon}\bigr)^n}\sum_{l=0}^{n}\binom{n}{l}(m-1)^l e^{-\epsilon l}\\
&=1.
\end{align*}

With feasibility verified, we can proceed to calculate the distortion. Let $g_{\epsilon}=1+(m-1)e^{-\epsilon}$. Then
\begin{align*}
&\mspace{21mu}\sum_{x\in\mathcal{D}^n}\sum_{y\in\mathcal{D}^n}p_Y(y)p_{X\mid Y}(x\mid y)d(x,y)\\
&=\frac{1}{(g_{\epsilon})^n}\sum_{y\in\mathcal{D}^n}p_Y(y)\sum_{l=0}^{n}\sum_{x\in\mathcal{N}_l(y)}e^{-\epsilon d(x,y)}d(x,y)\\
&=\frac{1}{(g_{\epsilon})^n}\sum_{y\in\mathcal{D}^n}p_Y(y)\sum_{l=0}^{n}\binom{n}{l}(m-1)^le^{-\epsilon l}l\\
&=\frac{n(m-1)e^{-\epsilon}\bigl(1+(m-1)e^{-\epsilon}\bigr)^{n-1}}{(g_{\epsilon})^n}\sum_{y\in\mathcal{D}^n}p_Y(y)\\
&=\frac{n}{1+\frac{e^{\epsilon}}{m-1}}\\
&=h(\epsilon),
\end{align*}
which completes the proof of the claim.

By this claim, there exists a feasible solution such that
\begin{equation*}
\sum_{x\in\mathcal{D}^n}\sum_{y\in\mathcal{D}^n}p_Y(y)p_{X\mid Y}(x\mid y)d(x,y) = h(\epsilon),
\end{equation*}
which implies
\begin{equation*}
D^*_\mathrm{relaxed}(\epsilon) \le h(\epsilon).
\end{equation*}
Combining this upper bound with the lower bound \eqref{eqDlowerbound} gives
\begin{equation*}
D^*_\mathrm{relaxed}(\epsilon) = h(\epsilon).
\end{equation*}
\end{proof}

\begin{lemma}\label{lem:lowerBounds}
The optimal value $D_{\mathrm{relaxed}}^*(\epsilon)=h(\epsilon)$ of R-PD implies the following lower bounds for any $D$ with $0\le D\le n$:
\begin{gather}
\epsilon_{\mathrm{i}}^*(D)\ge h^{-1}(D),\label{eq:lowerID}\\
\epsilon_{\mathrm{d}}^*(D)\ge\max\{h^{-1}(D)-\epsilon_X,0\}.\label{eq:lowerDP}
\end{gather}
\end{lemma}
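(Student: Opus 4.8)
The plan is to reduce the two lower bounds to the already-proved equality $D^*_{\mathrm{relaxed}}(\epsilon)=h(\epsilon)$ of Lemma~\ref{lem:R-PD}, using the fact (noted in the proof sketch of Theorem~\ref{thm:IDvsDP}) that R-PD is a common relaxation of PD-I and PD-DP, and then to invert the strictly decreasing function $h$ so as to turn distortion--privacy statements into privacy--distortion statements. Throughout, if $(\epsilon,D)$ is achievable under a given notion, the witnessing mechanism's conditional pmf $p_{Y\mid X}$ together with $p_X$ induces $p_Y(y)=\sum_x p_X(x)p_{Y\mid X}(y\mid x)$ and $p_{X\mid Y}(x\mid y)=p_X(x)p_{Y\mid X}(y\mid x)/p_Y(y)$ on the support of $p_Y$; one completes $p_{X\mid Y}(\cdot\mid y)$ on the null set of $p_Y$ by any pmf (e.g.\ uniform), which meets \eqref{conRelaxedPrivacy} trivially and contributes nothing to the objective.

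For identifiability, I would argue that this induced pair $(p_{X\mid Y},p_Y)$ is feasible for PD-I at parameter $\epsilon$ (the identifiability constraint holds by assumption on the mechanism, and \eqref{conPY} holds by construction) with objective value $\mathbb{E}[d(X,Y)]\le D$; since R-PD only replaces \eqref{conPY} by the weaker \eqref{conRelaxedNormPY} and keeps the same objective, the pair is R-PD-feasible at parameter $\epsilon$, so Lemma~\ref{lem:R-PD} forces $D\ge h(\epsilon)$. For differential privacy, I would multiply the constraint \eqref{conPrivacy} by $p_X(x)$ and use $p_X(x)\le e^{\epsilon_X}p_X(x')$ for $x\sim x'$ from \eqref{eqEpsX} to obtain \eqref{conPrivacyRel}, i.e.\ $p_{X,Y}(x,y)\le e^{\epsilon_{\mathrm{d}}+\epsilon_X}p_{X,Y}(x',y)$; dividing by $p_Y(y)$ shows $(p_{X\mid Y},p_Y)$ is R-PD-feasible at parameter $\epsilon_{\mathrm{d}}+\epsilon_X$ with the same objective $\mathbb{E}[d(X,Y)]\le D$, so Lemma~\ref{lem:R-PD} forces $D\ge h(\epsilon_{\mathrm{d}}+\epsilon_X)$.

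Finally I would invert $h$. Since $h(\epsilon)=n/(1+e^{\epsilon}/(m-1))$ is continuous and strictly decreasing on $[0,+\infty)$ with $h(0)=n(m-1)/m$ and $h(\epsilon)\to 0$ as $\epsilon\to+\infty$, it maps $[0,+\infty)$ bijectively onto $(0,h(0)]$ with inverse the stated $h^{-1}(D)=\ln(n/D-1)+\ln(m-1)$, which is decreasing and is $\ge 0$ precisely when $D\le h(0)$. From $D\ge h(\epsilon)$: for $D\le h(0)$ this gives $\epsilon\ge h^{-1}(D)$, and for $D>h(0)$ one has $h^{-1}(D)<0\le\epsilon$, so $\epsilon\ge h^{-1}(D)$ holds for every achievable $(\epsilon,D)$; taking the infimum over achievable $\epsilon$ yields \eqref{eq:lowerID}. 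Similarly, from $D\ge h(\epsilon_{\mathrm{d}}+\epsilon_X)$ and $\epsilon_{\mathrm{d}}+\epsilon_X\ge 0$ one gets $\epsilon_{\mathrm{d}}\ge h^{-1}(D)-\epsilon_X$ when $D\le h(0)$, while for $D>h(0)$ the bound $h^{-1}(D)-\epsilon_X<0\le\epsilon_{\mathrm{d}}$ is automatic; combining with the trivial $\epsilon_{\mathrm{d}}\ge 0$ and taking the infimum gives \eqref{eq:lowerDP}. The algebra in the two reductions and the monotonicity facts about $h$ are routine; the only mild subtlety — passing from a mechanism's $p_{Y\mid X}$ to the pair $(p_{X\mid Y},p_Y)$ when $p_Y$ is not fully supported — is handled by the arbitrary completion described above, so I do not anticipate a genuine obstacle.
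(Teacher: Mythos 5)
Your proof is correct and follows essentially the same route as the paper's: both reduce PD-I and PD-DP to the common relaxation R-PD (shifting the privacy parameter by $\epsilon_X$ in the differential-privacy case), invoke Lemma~\ref{lem:R-PD} to get $D\ge h(\epsilon)$, and invert the decreasing function $h$. The paper handles the infimum via a $\delta$-perturbation of $\epsilon^*(D)$ rather than your direct infimum over achievable pairs, and leaves the construction of $(p_{X\mid Y},p_Y)$ and the $D>h(0)$ case implicit, but these are presentational differences only.
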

\begin{proof}
First we derive the lower bound on $\epsilon_{\mathrm{i}}^*(D)$. Let $\delta$ be an arbitrary positive number. For any $D$ with $0\le D\le n$, let $\epsilon_{D,\delta}=\epsilon_{\mathrm{i}}^*(D)+\delta$. Then by the definition of $\epsilon_{\mathrm{i}}^*$, we have that $(\epsilon_{D,\delta},D)$ is achievable under identifiability. Therefore
\begin{equation*}
D\ge D_{\mathrm{i}}^*(\epsilon_{D,\delta})\ge D_{\mathrm{relaxed}}^*(\epsilon_{D,\delta})=h(\epsilon_{D,\delta}),
\end{equation*}
where $D_{\mathrm{i}}^*(\cdot)$ is the optimal value of PD-I. Since $h$ is a decreasing function, this implies that $\epsilon_{D,\delta}\ge h^{-1}(D)$. Therefore
\begin{equation*}
\epsilon_{\mathrm{i}}^*(D)\ge h^{-1}(D)-\delta.
\end{equation*}
Letting $\delta\to 0$ yields
\begin{equation*}
\epsilon_{\mathrm{i}}^*(D)\ge h^{-1}(D).
\end{equation*}

Next we derive the lower bound on $\epsilon_{\mathrm{d}}^*(D)$ using arguments similar to those in the proof of the lower bound on $\epsilon_{\mathrm{i}}^*(D)$. Let $\delta$ be an arbitrary positive number. For any $D$ with $0\le D\le n$, let $\epsilon_{D,\delta}=\epsilon_{\mathrm{d}}^*(D)+\delta$. Then by the definition of $\epsilon_{\mathrm{d}}^*$, we have that $(\epsilon_{D,\delta},D)$ is achievable under differential privacy. Therefore
\begin{equation*}
D\ge D_{\mathrm{d}}^*(\epsilon_{D,\delta})\ge D_{\mathrm{relaxed}}^*(\epsilon_{D,\delta}+\epsilon_X)=h(\epsilon_{D,\delta}+\epsilon_X),
\end{equation*}
where $D_{\mathrm{d}}^*(\cdot)$ is the optimal value of PD-DP. Since $h$ is a decreasing function, this implies that $\epsilon_{D,\delta}+\epsilon_X\ge h^{-1}(D)$. Therefore
\begin{equation*}
\epsilon_{\mathrm{d}}^*(D)\ge h^{-1}(D)-\epsilon_X-\delta.
\end{equation*}
Letting $\delta\to 0$ yields
\begin{equation*}
\epsilon_{\mathrm{d}}^*(D)\ge h^{-1}(D)-\epsilon_X.
\end{equation*}
Since the privacy level is nonnegative, we obtain the lower bound in \eqref{eq:lowerDP}.
\end{proof}

\begin{lemma}\label{lem:lower2ID}
The privacy--distortion function $\epsilon_{\mathrm{i}}^*$ of a database $X$ is bounded from below as
\begin{equation}
\epsilon_{\mathrm{i}}^*(D)\ge\epsilon_X
\end{equation}
for any $D$ with $0\le D\le n$, where $\epsilon_X$ is the constant defined in \eqref{eqEpsX}.
\end{lemma}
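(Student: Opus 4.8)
The plan is to show that \emph{every} mechanism satisfying $\epsilon$-identifiability, regardless of its distortion, must have $\epsilon\ge\epsilon_X$; the bound on $\epsilon_{\mathrm{i}}^*(D)$ then follows immediately by taking the infimum over achievable privacy levels. The whole argument rests on the elementary observation that averaging the pointwise identifiability inequalities against the marginal $p_Y$ recovers a constraint between the prior probabilities of neighboring databases.

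First I would fix an arbitrary mechanism $\mathcal{M}$ with output $Y$ that satisfies $\epsilon$-identifiability for some $\epsilon$, let $p_Y$ be the induced marginal pmf of $Y$, and let $p_{X\mid Y}$ be the corresponding posterior. The key identity is the law of total probability
\begin{equation*}
p_X(x)=\sum_{y\in\mathcal{D}^n}p_{X\mid Y}(x\mid y)\,p_Y(y),\qquad x\in\mathcal{D}^n,
\end{equation*}
which is exactly the feasibility constraint \eqref{conPY} in PD-I. For any pair of neighbors $x\sim x'$, the identifiability constraint gives $p_{X\mid Y}(x\mid y)\le e^{\epsilon}p_{X\mid Y}(x'\mid y)$ for every $y$ with $p_Y(y)>0$; multiplying by $p_Y(y)\ge 0$ and summing over $y$ yields $p_X(x)\le e^{\epsilon}p_X(x')$. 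Maximizing over neighboring pairs and using the definition \eqref{eqEpsX} of $\epsilon_X$ gives $\epsilon\ge\epsilon_X$.

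Finally, for any $D$ with $0\le D\le n$ and any achievable pair $(\epsilon,D)$ under identifiability, the above shows $\epsilon\ge\epsilon_X$; taking the infimum over all such $\epsilon$ yields $\epsilon_{\mathrm{i}}^*(D)\ge\epsilon_X$. The argument has essentially no technical obstacle; the only points needing a word of care are outputs $y$ with $p_Y(y)=0$ (where the posterior is undefined but which contribute nothing to the total-probability sum) and neighboring pairs with $p_X(x')=0$ (where $p_X(x)\le e^{\epsilon}p_X(x')$ forces $\epsilon=+\infty$, so the claimed inequality still holds, consistently with $\epsilon_X=+\infty$ in that case).
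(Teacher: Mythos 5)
Your proposal is correct and uses the same key step as the paper's proof: multiplying the identifiability inequality $p_{X\mid Y}(x\mid y)\le e^{\epsilon}p_{X\mid Y}(x'\mid y)$ by $p_Y(y)$ and summing over $y$ to obtain $p_X(x)\le e^{\epsilon}p_X(x')$, then invoking the definition of $\epsilon_X$. The paper frames this as a proof by contradiction with a $\delta$ slack while you argue directly that every achievable identifiability level is at least $\epsilon_X$, but this is only a cosmetic difference.
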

\begin{proof}
Suppose by contradiction that there exists a $D$ with $0\le D\le n$ such that $\epsilon_{\mathrm{i}}^*(D)<\epsilon_X$. Let $\delta$ be an arbitrary positive number with $0<\delta<\epsilon_X-\epsilon_{\mathrm{i}}^*(D)$, and let $\epsilon=\epsilon_{\mathrm{i}}^*(D)+\delta$. Then $\epsilon<\epsilon_X$ and $(\epsilon,D)$ is achievable under identifiability. Consider the mechanism that achieves $(\epsilon,D)$. Then by the requirement of identifiability, for any neighboring $x,x'\in\mathcal{D}^n$ and any $y\in\mathcal{D}^n$,
\begin{equation}\label{eqPostPrivacy}
p_{X\mid Y}(x\mid y)\le e^{\epsilon}p_{X\mid Y}(x'\mid y).
\end{equation}
Let $p_Y(\cdot)$ be the pmf of the output $Y$. Then $p_Y(y)\ge 0$ for any $y\in\mathcal{D}^n$. Therefore, multiplying both sides of \eqref{eqPostPrivacy} by $p_Y(y)$ and taking summation over $y\in\mathcal{D}^n$ yield
\begin{equation*}
\sum_{y\in\mathcal{D}^n}p_{X\mid Y}(x\mid y)p_Y(y)\le \sum_{y\in\mathcal{D}^n}e^{\epsilon}p_{X\mid Y}(x'\mid y)p_Y(y),
\end{equation*}
which implies
\begin{equation*}\label{eqContradict}
p_X(x)\le e^{\epsilon} p_X(x').
\end{equation*}
Then there do not exist neighboring $x,x'\in\mathcal{D}^n$ with $p_X(x)=e^{\epsilon_X}p_X(x')$ since $\epsilon<\epsilon_X$, which contradicts with the definition of $\epsilon_X$ in \eqref{eqEpsX}.
\end{proof}

\begin{lemma}\label{lem:DP-I}
For $\epsilon\ge\widetilde{\epsilon}_X$, the mechanism $\mathcal{E}_i$ defined in \eqref{eqPostExpMec} satisfies $\epsilon$-identifiability, and the distortion of $\mathcal{E}_i$ is given by $\mathbb{E}[d(X,Y)]=h(\epsilon)$.
\end{lemma}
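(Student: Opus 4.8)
The plan is to verify the two claims separately: first that $\mathcal{E}_i$ satisfies $\epsilon$-identifiability, and second that its distortion equals $h(\epsilon)$. The key observation is that the definition \eqref{eqPostExpMec} of $\mathcal{E}_i$ is precisely set up so that the induced posterior $p_{X\mid Y}(x\mid y)$ takes the exponential form appearing in the definition of $\widetilde{\epsilon}_X$. Indeed, by Bayes' rule,
\begin{equation*}
p_{X\mid Y}(x\mid y)=\frac{p_X(x)p_{Y\mid X}(y\mid x)}{p_Y(y)}=\frac{e^{-\epsilon d(x,y)}}{\bigl(1+(m-1)e^{-\epsilon}\bigr)^n},
\end{equation*}
after the $p_X(x)$ and $p_Y(y)$ factors cancel. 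The condition $\epsilon\ge\widetilde{\epsilon}_X$ is exactly what guarantees that this posterior is feasible, i.e., that there is a genuine pmf $p_Y$ consistent with it (and with $p_X$ as the marginal of $X$); one must also check that the resulting $p_{Y\mid X}$ in \eqref{eqPostExpMec} is a legitimate conditional pmf, which follows since $p_Y$ is obtained from the feasibility condition.

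For the identifiability claim, once the posterior is in the stated exponential form, the argument is identical to the verification of constraint \eqref{conRelaxedPrivacy} already carried out in the proof of Lemma~\ref{lem:R-PD}: for neighboring $x\sim x'$ and any $y$, the triangle inequality gives $d(x,y)\le d(x',y)+1$, hence $p_{X\mid Y}(x\mid y)\le e^{\epsilon}p_{X\mid Y}(x'\mid y)$. So $\mathcal{E}_i$ is $\epsilon$-identifiable. For the distortion claim, I would write
\begin{equation*}
\mathbb{E}[d(X,Y)]=\sum_{y\in\mathcal{D}^n}p_Y(y)\sum_{x\in\mathcal{D}^n}p_{X\mid Y}(x\mid y)d(x,y),
\end{equation*}
and observe that the inner sum is exactly the quantity computed in the second claim of Lemma~\ref{lem:R-PD}, namely $h(\epsilon)$, independent of $y$. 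Since $\sum_y p_Y(y)=1$, this yields $\mathbb{E}[d(X,Y)]=h(\epsilon)$.

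The only real content beyond bookkeeping is making sure the feasibility condition defining $\widetilde{\epsilon}_X$ is correctly invoked: we need $\epsilon\ge\widetilde{\epsilon}_X$ so that the posterior $\{p_{X\mid Y}(x\mid y)\}$ is feasible, which by definition means there exists a pmf $p_Y$ that is the marginal of $Y$; plugging this $p_Y$ back into Bayes' rule produces the $p_{Y\mid X}$ of \eqref{eqPostExpMec}, which is then automatically nonnegative and normalized. I expect this consistency check — that the $p_Y$ furnished by the definition of $\widetilde{\epsilon}_X$ really does close the loop and make \eqref{eqPostExpMec} well-defined — to be the main (though modest) obstacle; the rest reuses computations already in hand from Lemma~\ref{lem:R-PD}.
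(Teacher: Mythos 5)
Your proof is correct and follows essentially the same route as the paper's: compute the posterior via Bayes' rule, observe that it has the exponential form whose feasibility is exactly what $\epsilon\ge\widetilde{\epsilon}_X$ guarantees, and then reuse the two computations already done in the proof of Lemma~\ref{lem:R-PD} (the neighboring-posterior inequality and the per-$y$ distortion value $h(\epsilon)$). One small nit: the inequality you actually need for identifiability is $d(x,y)\ge d(x',y)-1$, so that $e^{-\epsilon d(x,y)}\le e^{\epsilon}e^{-\epsilon d(x',y)}$; this holds just as well by the triangle inequality since $d(x,x')=1$, so the argument goes through unchanged.
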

\begin{proof}
Consider any $\epsilon\ge\widetilde{\epsilon}_X$. Then under the mechanism $\mathcal{E}_i$, the posterior probability for any $x,y\in\mathcal{D}^n$ is given by
\begin{equation*}
p_{X\mid Y}(x\mid y)=\frac{p_{Y\mid X}(y\mid x)p_X(x)}{p_Y(y)}=\frac{e^{-\epsilon d(x,y)}}{\bigl(1+(m-1)e^{-\epsilon}\bigr)^n}.
\end{equation*}
As shown in the proof of Lemma~\ref{lem:R-PD}, this $\{p_{X\mid Y}(x\mid y),x,y\in\mathcal{D}^n\}$ and the corresponding $\{p_Y(y),y\in\mathcal{D}^n\}$ form an optimal solution of the relaxed optimization problem R-PD. Following the same arguments as in the proof of Lemma~\ref{lem:R-PD} we can conclude that $\mathcal{E}_i$ satisfies $\epsilon$-identifiability, and the distortion of $\mathcal{E}_i$ is given by $\mathbb{E}[d(X,Y)]=h(\epsilon)$.
\end{proof}

\begin{lemma}\label{lemMecE}
The mechanism $\mathcal{E}_d$ defined in \eqref{eqExpMec} satisfies $\epsilon$-differential privacy, and the distortion of $\mathcal{E}_d$ is given by $\mathbb{E}[d(X,Y)]=h(\epsilon)$.
\end{lemma}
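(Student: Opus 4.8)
The plan is to verify directly that the exponential mechanism $\mathcal{E}_d$ given by $p_{Y\mid X}(y\mid x)= e^{-\epsilon d(x,y)}/\bigl(1+(m-1)e^{-\epsilon}\bigr)^n$ both satisfies the differential privacy constraint and achieves distortion exactly $h(\epsilon)$. In fact, nearly all the required computation has already been carried out inside the proof of Lemma~\ref{lem:R-PD}, so the proof of Lemma~\ref{lemMecE} should largely consist of pointing to those calculations with $p_{Y\mid X}$ here playing the role of $p_{X\mid Y}$ there.

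First I would check that $\mathcal{E}_d$ is a valid mechanism: for each fixed $x$, the normalization $\sum_{y\in\mathcal{D}^n}p_{Y\mid X}(y\mid x)=1$ follows by partitioning $\mathcal{D}^n$ into the shells $\mathcal{N}_l(x)$ of elements at Hamming distance $l$ from $x$, using $|\mathcal{N}_l(x)|=\binom{n}{l}(m-1)^l$, and recognizing the binomial sum $\sum_{l=0}^n\binom{n}{l}(m-1)^l e^{-\epsilon l}=\bigl(1+(m-1)e^{-\epsilon}\bigr)^n$ --- this is verbatim the normalization computation in the second claim of Lemma~\ref{lem:R-PD}. Second, for the $\epsilon$-differential privacy constraint \eqref{eqDPConditional}: for neighboring $x\sim x'$ and any $y$, the triangle inequality gives $\bigl|d(x,y)-d(x',y)\bigr|\le d(x,x')=1$, hence $d(x,y)\ge d(x',y)-1$, so $e^{-\epsilon d(x,y)}\le e^{\epsilon}e^{-\epsilon d(x',y)}$, and dividing by the common normalizing constant yields $p_{Y\mid X}(y\mid x)\le e^\epsilon p_{Y\mid X}(y\mid x')$. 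Third, for the distortion: since $p_X$ is a pmf and for every fixed $x$ we have $\sum_{y}p_{Y\mid X}(y\mid x)d(x,y)=\sum_{l=0}^n l\binom{n}{l}(m-1)^l e^{-\epsilon l}/\bigl(1+(m-1)e^{-\epsilon}\bigr)^n = n(m-1)e^{-\epsilon}/\bigl(1+(m-1)e^{-\epsilon}\bigr) = h(\epsilon)$ (again identical to the distortion computation in Lemma~\ref{lem:R-PD}, using $\sum_l l\binom{n}{l}z^l = nz(1+z)^{n-1}$), averaging over $x$ with weights $p_X(x)$ leaves $\mathbb{E}[d(X,Y)]=h(\epsilon)$ because this per-row quantity does not depend on $x$.

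There is essentially no obstacle here: the key observation is that $\mathcal{E}_d$ has a conditional pmf $p_{Y\mid X}(y\mid x)$ that is a function of $d(x,y)$ only, with exactly the form of the posterior $p_{X\mid Y}(x\mid y)$ analyzed in Lemma~\ref{lem:R-PD}, and the Hamming distance is symmetric, so every shell-decomposition identity transfers directly with the roles of $x$ and $y$ interchanged. The only thing to be slightly careful about is bookkeeping --- making sure the constant $\bigl(1+(m-1)e^{-\epsilon}\bigr)^n$ is used consistently and that the distortion is computed as an expectation over $X$ rather than over $Y$ --- but since the inner sum is constant in $x$, this is immediate. Thus the proof reduces to three short verifications, each of which either reuses a computation already in the appendix or is a one-line triangle-inequality argument.
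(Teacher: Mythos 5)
Your proposal is correct and follows essentially the same route as the paper, whose proof of Lemma~\ref{lemMecE} simply observes that $p_{Y\mid X}(y\mid x)$ under $\mathcal{E}_d$ has the same functional form as the posterior analyzed in the second claim of Lemma~\ref{lem:R-PD} and defers to those computations; your three verifications (normalization, the triangle-inequality ratio bound, and the shell-sum distortion calculation) are exactly the computations being invoked there, transferred via the symmetry of $d$. Your statement of the triangle inequality, $d(x,y)\ge d(x',y)-1$, is in fact the correct form of the step that the paper writes with a sign typo.
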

\begin{proof}
Under mechanism $\mathcal{E}_d$, $\{p_{Y\mid X}(y\mid x),x,y\in\mathcal{D}^n\}$ has the same form as the posteriors under mechanism $\mathcal{E}_i$. Therefore still by similar arguments as in the proof of Lemma~\ref{lem:R-PD}, $\mathcal{E}_d$ satisfies $\epsilon$-differential privacy, and the distortion of $\mathcal{E}_d$ is given by $\mathbb{E}[d(X,Y)]=h(\epsilon)$.
\end{proof}

\noindent\textbf{Derivation of the Stationarity Condition}

The Lagrangian of the optimization problem PD-MIP can be written as
\begin{align*}
&\mspace{24mu}J(\{p_{Y\mid X}(y\mid x),x,y\in\mathcal{D}^n\})\nonumber\\
&=\sum_{x\in\mathcal{D}^n}\sum_{y\in\mathcal{D}^n}p_X(x)p_{Y\mid X}(y\mid x)\ln\frac{p_{Y\mid X}(y\mid x)}{\mspace{-6mu}\displaystyle\sum_{x'\in\mathcal{D}^n}p_X(x')p_{Y\mid X}(y\mid x')}\\
&\mspace{24mu}+\lambda\Biggl(\sum_{x\in\mathcal{D}^n}\sum_{y\in\mathcal{D}^n}p_X(x)p_{Y\mid X}(y\mid x)d(x,y)-D\Biggr)\\
&\mspace{24mu}+\sum_{x\in\mathcal{D}^n}\mu(x)\Biggl(\sum_{y\in\mathcal{D}^n}p_{Y\mid X}(y\mid x)-1\Biggr)\\
&\mspace{24mu}-\sum_{x\in\mathcal{D}^n}\sum_{y\in\mathcal{D}^n}\eta(x,y)\;p_{Y\mid X}(y\mid x),
\end{align*}
where $\lambda,\mu(x),\eta(x,y),\;x,y\in\mathcal{D}^n$ are the Lagrangian multipliers. Consider the KKT conditions of this optimization problem. The stationarity condition is given by
\begin{equation*}
\frac{\partial J}{\partial p_{Y\mid X}(y\mid x)}=0,\quad x,y\in\mathcal{D}^n,
\end{equation*}
which is equivalent to
\begin{equation*}\label{eqStation}
p_X(x)\ln\frac{p_{Y\mid X}(y\mid x)}{p_Y(y)}+\lambda p_X(x)d(x,y)\\
+\mu(x)-\eta(x,y)=0,\quad x,y\in\mathcal{D}^n.
\end{equation*}

\bibliographystyle{IEEEtran}
\bibliography{differential-privacy}{}
\end{document}